\documentclass[12pt]{article} % used for ShareLatex

%==================================================
%AMS
%==================================================
\usepackage{amsmath,amssymb,amsfonts,amsthm,mathrsfs}

%==================================================
%Font
%==================================================
\usepackage{kpfonts}

%==================================================
%Margin
%==================================================
\usepackage[margin=1in]{geometry}
\usepackage{natbib}
\setcitestyle{authoryear,open={(},close={)}}
\usepackage[breaklinks=true]{hyperref}
\usepackage{breakcites}
%==================================================
%Display breaks
%==================================================
\nopagebreak

%==================================================
%Spacing
%==================================================
\usepackage{setspace}
\onehalfspacing

%==================================================
%Color
%==================================================
\usepackage{color}

%==================================================
%Hyper ref
%==================================================
\usepackage{hyperref}
\hypersetup{
  colorlinks=true,
  linkcolor=blue,
  citecolor=red
}

%==================================================
%Table
%==================================================
\usepackage{array}
\usepackage{tabularx}
\newcolumntype{L}[1]{>{\raggedright\arraybackslash}p{#1}}
\newcolumntype{C}[1]{>{\centering\arraybackslash}p{#1}}
\newcolumntype{R}[1]{>{\raggedleft\arraybackslash}p{#1}}

%==================================================
%Theorem environment
%==================================================
{\theoremstyle{definition}}
\newtheorem{theorem}{Theorem}
\newtheorem{lemma}{Lemma}
\newtheorem{claim}{Claim}

\newtheorem{definition}{Definition}
\newtheorem{proposition}{Proposition}
\newtheorem{corollary}{Corollary}
%\newenvironment{proof}[1][Proof]{\noindent\textbf{#1} }{\ \rule{0.5em}{0.5em}}

%==================================================
%Proof
%==================================================

\makeatletter
\newenvironment{proofof}[1]{\par
	\pushQED{\qed}%
	\normalfont \topsep6\p@\@plus6\p@\relax
	\trivlist
	\item[\hskip\labelsep
	\bfseries
	Proof of #1\@addpunct{.}]\ignorespaces}
	{\popQED\endtrivlist\@endpefalse}
\makeatother

%==================================================
%PS tricks
%==================================================
%\usepackage{pstricks,pst-node,pst-tree,pstricks-add,pst-all}

%==================================================
%Tikz
%==================================================
\usepackage{tikz}
\usetikzlibrary{matrix}
\usetikzlibrary{decorations.pathreplacing,angles,quotes}
%Two node styles for game trees: solid and hollow
\tikzset{
	solid node/.style={circle,draw,inner sep=1.0,fill=black},
	hollow node/.style={circle,draw,inner sep=1.0}}
%Option [visible on=<+->] to uncover parts of a tikzpicture
\tikzset{
	invisible/.style={opacity=0},
	visible on/.style={alt=#1{}{invisible}},
	alt/.code args={<#1>#2#3}{
		%\pgfkeysalso doesn't change the path
		\alt<#1>{\pgfkeysalso{#2}}{\pgfkeysalso{#3}}},}
\usetikzlibrary{arrows}
\usetikzlibrary{shapes}

\usepackage{pgfplots} 
\pgfplotsset{compat=newest}
\pgfplotsset{plot coordinates/math parser=false}
\pgfplotsset{
	every non boxed x axis/.style={
	xtick align=center,
	enlarge x limits=true,
	x axis line style={line width=0.8pt, -latex}
},
    every boxed x axis/.style={}, enlargelimits=false
}
\pgfplotsset{
	every non boxed y axis/.style={
	ytick align=center,
	enlarge y limits=true,
	y axis line style={line width=0.8pt, -latex}
},
	every boxed y axis/.style={}, enlargelimits=false
}
\usetikzlibrary{
	arrows.meta,
	intersections}
%==================================================
%Math settings
%==================================================

\def\N{\mathbb{N}}

\def\cS{\mathcal{S}}

%==================================================
%TO DO LIST
%==================================================

%==================================================
%TITLE
%==================================================
\title{Repeated Coordination with Private Learning%
  \thanks{We thank Drew Fudenberg, George Mailath, Manuel
    Mueller-Frank, Roger Myerson, Philipp Strack, and Adam Wierman for
    comments. This work was supported by a grant from the Simons
    Foundation (\#419427, Omer Tamuz).}} \author{Pathikrit Basu%
  \thanks{California Institute of Technology. E-mail -
    pathkrtb@caltech.edu} \space Kalyan Chatterjee%
  \thanks{Pennsylvania State University. E-mail - kchatterjee@psu.edu}
  \space Tetsuya Hoshino%
  \thanks{Pennsylvania State University. E-mail - tzh144@psu.edu}
  \space Omer Tamuz%
  \thanks{California Institute of Technology. E-mail -
    omertamuz@gmail.com} } \date{\today}

\begin{document}
\maketitle

%==================================================
%ABSTRACT
%==================================================
\begin{abstract}
  We study a repeated game with payoff externalities and observable
  actions where two players receive information over time about an
  underlying payoff-relevant state, and strategically coordinate their
  actions. Players learn about the true state from private signals, as
  well as the actions of others. They commonly learn the true
  state~\citep{cems1}, but do not coordinate in every equilibrium.  We
  show that there exist stable equilibria in which players can overcome 
  unfavorable signal realizations and 
  eventually coordinate on the correct action, for any discount
  factor. For high discount factors, we show that in addition players
  can also achieve efficient payoffs.
  \\
  \\
  Keywords: repeated games, coordination, learning.
\end{abstract}

%==================================================
\section{Introduction}
% ==================================================

We consider a repeated coordination game with observable actions and
with unobserved stage utilities that depend on an unknown, unchanging
state of nature. The state is either high ($H$) or low ($L$). In each
period, players choose between two actions: invest ($I$) or not invest
($N$). Not investing is a safe action that always yields stage utility
0, whereas investment involves a cost. Investment yields a profit only
if the state is high and the other player also invests.

In each period, each player observes a private signal regarding the
state, and chooses whether to invest or not. These signals are
independent, and are observed regardless of the actions taken by the
players.\footnote{Equivalently, when players both invest they receive independent random payoffs that depend on the state, but even when they do not invest, they observe the (random) payoff they would have gotten from investing. One could alternatively consider a setting in which players receive signals from distributions that also depend on their actions. We believe that our results hold as long as these signals are informative for any action.}
Thus, for each player the uncertainty regarding the state
eventually vanishes, through the aggregation of these
signals. However, because signals are private, a player does not know
what information the other player has, and hence does not know exactly
what his beliefs are. The question that now arises is: can the players
coordinate on the right action? That is, can they eventually
coordinate on investment when the state is high, and on not investing
when the state is low?  While common learning~\citep{cems1}---whose
introduction was motivated by games of this nature---is obtained in
this game, it is far from obvious that coordination is possible. 
In particular, it is not obvious that there exist {\em stable} equilibria, 
in which, when the state is high, players can recover from events of 
low signal realizations that cause coordination to stop  \citep[See, e.g., ][]{ellison1994cooperation}.

% This paper considers a repeated two-action coordination game in which each player has private information but players' actions are observable. The two actions can be thought of as ``Invest'' ($I$), which is the risky action paying off if both players choose it in the ``high'' state ($H$), and ``Not invest'' ($N$), the safe action and a dominant strategy in the one-period game if the state of nature is ``low'' ($L$). However, action profile $(N,N)$ is also an equilibrium in the state $H$. The private information consists of signals received by each player in each period, independent across players and across time conditional on the unknown and unchanging state of nature.

In the one-shot version of our game, when signals are strong enough,
players can coordinate on the correct action, with high probability;
this follows from known results on common-$q$ beliefs in static
Bayesian games \citep{monderersamet}. In our repeated setting, since
signals accumulate over time, private information does indeed become
strong. However, agents are forward looking, and so these results do not apply directly. Moreover, a player's
beliefs are influenced both by his own private signals, as well as the
actions of the other player. Thus reasoning directly about players'
beliefs is impossible, and one needs to revert to more abstract
arguments, in the flavor of the social learning literature, in order
to analyze the dynamics. Indeed, the question that we ask regarding
eventual coordination on the correct action closely corresponds to key
questions regarding eventual learning addressed in the social learning
literature~\citep[e.g.,][]{smith2000pathological}.

We show that it is indeed possible to coordinate with high
probability. In fact, we construct equilibria with the stronger
property of {\em action efficiency}, in which, from some time period
on, the players only choose the correct action. It thus follows that these equilibria are {\em stable}: 
when the state is high, agents recover from events of low signal realizations, and eventually coordinate.
We further show that
these equilibria achieve near optimal {\em payoff efficiency} when the
discount factor is high.

Static coordination games of this kind have become an insightful way
to model many real-world problems such as currency attacks
\citep{morrisshin}, bank runs \citep{goldstein}, political regime
changes \citep{edmond}, and arms races \citep{baliga}. \cite{yang}
considers investment in a fluctuating risky asset, where the payoff
depends, as in our paper, on both the fundamentals (the state of
nature) and the action of the other player.  A common assumption in
this literature is that direct communication is not allowed. We also
make this assumption, and indeed our question is trivial without it.

We describe both equilibrium strategies and payoffs. The construction
of the equilibrium is not fully explicit; at some point we have to use
a fixed point theorem to show existence of the equilibria with the
properties we desire. 

The construction is roughly as follows: Initially, players play action
$N$ for a sufficiently long amount of time, enough to accumulate a
strong private signal. After this, an {\em investment phase} begins, in
which players invest if their beliefs are above some threshold
level. In the high state, both players are very likely to be above
this threshold, while in the low state they are both likely to be
below it.

The investment phase continues until one (or both) of the
players choose $N$, in which case a {\em cool-off phase} of predetermined
length is triggered. During the cool-off phase the players do not
invest, and instead again accumulate signals, so that, at the end of
the the cool-off phase and the beginning of the next investment phase,
they are again likely to choose the correct action.

In the high state, we show that each investment phase can, with high
probability, last forever. Indeed, we show that almost surely,
eventually one of these investment phases will last forever, and so
the players eventually coordinate on the correct action.  In the low
state, we show that eventually players will stop investing, as their
beliefs become too low to ever pass the investment thresholds. Thus,
almost surely, our equilibria display {\em action efficiency}, with
the players coordinating on the right action.

Action efficiency is achievable for any discount factor. However,
because of the initial phase in which players do not invest, there is
a significant loss in payoff for low discount factors. For high
discount factors, we show that there is almost no loss in payoff, as
compared to agents who start the game knowing the state, and
coordinate efficiently.

To prove the existence of our equilibria, we first restrict the
players to strategies that implement the above described cool-off and
investment phases, and apply a standard fixed point theorem to find an
equilibrium in the restricted game. We then show that equilibria of
the restricted  game are also equilibria of the unrestricted game. To apply this proof strategy we frame our discussion in terms of  mixed strategies rather than the more commonly used behavioral strategies.

We shall now consider briefly some relevant literature. In Section \ref{section: model}, we formally describe the model. Section \ref{section: equilibrium analysis} contains the analysis and construction of the equilibrium. Section \ref{section: discussion} discusses specific aspects of the model in greater detail and Section \ref{section: conclusion} concludes. Proofs are relegated to the Appendix unless otherwise mentioned.

%==================================================
\subsection*{Related literature}
%==================================================
There is work related to this paper in three different parts of the
economics literature.

A very similar setting of a repeated coordination game with private
signals regarding a state of nature is offered as a motivating example
by \cite{cems1}, who develop the notion of \textit{common
  learning} to tackle this class of repeated games. We believe that
the techniques we develop to solve our game should be easily
applicable to theirs.

The literature on \textit{static coordination with private
  information} mentioned earlier in the introduction has dynamic
extensions. The dynamic game of regime change by \cite{angeletos}, for
example, has a similar set-up in terms of private signals and publicly
observable (in aggregate) actions, in a repeated game with a continuum
of agents. In their setting, the game ends once the agents coordinate
on a regime change. In contrast, in our setting one period of
successful investment does not end the game. \cite{chassang} is also
related, but differs from our setting in that is has a state of nature
that changes over time, and an action that is irreversible.

The recent literature on \textit{folk theorems} in repeated settings
is possibly the closest, in its statement of the problem addressed, to
our paper. The papers we consider in some detail below are
\cite{yamamoto} and \cite{sugayayamamoto}.  Most of the other existing
work on learning in repeated games assumes that players observe public
signals about the state, and focuses on equilibrium strategies in
which players ignore private signals and learn only from public
signals \citep{wiseman2005partial, wiseman2012partial,
  fudenberg2010repeated,fudenberg2011learning}.

\cite{yamamoto} and \cite{sugayayamamoto} are the closest papers to
ours. However, these papers are significantly different along the
following dimensions. First, they focus on ex-post equilibria, which
are defined as sequential equilibria in the infinitely repeated game
in which the state $\theta$ is common knowledge for each
$\theta$. Since an ex-post equilibrium is a stronger notion, their
folk theorem (based on ex-post equilibria) implies the folk theorem
based on sequential equilibria. Their results do not apply to our
setting, since they make assumptions that our model does not
satisfy. Second, for their folk theorem, they assume that the set of
feasible and individually rational payoffs has the dimension equal to
the number of players for \emph{every} state. However, this assumption
fails to hold in our environment. Third, their folk theorem fails to
hold when there are two players and private signals are conditionally
independent. Since our work considers such cases, it is complementary
to theirs also in this sense.

%==================================================
\section{Model}
\label{section: model}
%==================================================
We consider an infinitely repeated game of incomplete
information. There is an unknown payoff-relevant binary state of
nature $\theta \in \Theta = \{H,L\}$. Time is discrete and the horizon
is infinite, i.e., $t = 1,2,3 ...$. There are two players
$\mathcal{N} = \{1,2\}$. Each player $i$, at every period $t$, chooses
an action $a_i \in A_i = \{I,N\}$. Action $I$ can be interpreted as an
action to \textit{invest} and action $N$ as \textit{not invest}.  We
assume that the state of nature is determined at period $t = 1$ and
stays fixed for the rest of the game. Further, both players hold a
common prior $(p_0,1-p_0) \in \Delta(\Theta)$, where $p_{0} \in (0,1)$
is the initial probability that the state is $\theta = H$.  The payoff
to player $i$, in every period, from any action profile
$a \in A = A_1 \times A_2$ depends on the state and  is defined by a
per-period payoff function $u_i(a,\theta)$. The following table shows
the payoffs from actions associated with states $\theta=H$ and
$\theta = L$ respectively: \renewcommand{\arraystretch}{1.1}
\begin{table}[h]
\centering
\begin{minipage}[t]{0.45\textwidth}
	\centering
	\begin{tabular}{C{10mm}C{20mm}C{15mm}}
			\hline\hline
			\multicolumn{1}{r}{}
		&	\multicolumn{1}{c}{$I$}
		&	\multicolumn{1}{c}{$N$}
		\\	\hline
			$I$
		&	$1-c,1-c$
		&	$-c,0$
		\\	$N$
		&	$0,-c$
		&	$0,0$
	\end{tabular}
	\caption{$\theta = H$}
\end{minipage}
\begin{minipage}[t]{0.45\textwidth}
	\centering
	\begin{tabular}{C{10mm}C{20mm}C{15mm}}
			\hline\hline
			\multicolumn{1}{r}{}
		&	\multicolumn{1}{c}{$I$}
		&	\multicolumn{1}{c}{$N$}
		\\	\hline
			$I$
		&	$-c,-c$
		&	$-c,0$
		\\	$N$
		&	$0,-c$
		&	$0,0$
	\end{tabular}
	\caption{$\theta = L$}
\end{minipage}
\end{table}
\renewcommand{\arraystretch}{1.0}

\noindent
Here $c \in (0,1)$ is a cost of investing, and so investment is
beneficial if and only if the other player chooses to invest and the
state is $H$. In this case both players obtain a payoff equal to
$1-c$. In the state $H$, there are two stage-game equilibria in pure
strategies, $(I,I)$ and $(N,N)$, where the former is Pareto
dominant. In the state $L$, however, action $N$ is strictly dominant,
thereby the unique stage-game equilibrium being $(N,N)$. We assume
that stage utilities are not observed by the agents.

%==================================================
\paragraph{Private Signals.}
%==================================================
At every period $t$, each player $i$ receives a \textit{private signal} $x_{it} \in X$ about the state, where $X$ is a finite signal space.%FN
\footnote{The assumption that the signal space is finite is not crucial for establishing our results. The key feature of the signal distributions that we exploit is that they generate bounded likelihood ratios.}
The signal distribution under the state $\theta$ is given by $f_{\theta} \in \Delta(X)$.
We assume that conditional on the state $\theta$, private signals are \textit{independent} across players and across time. We also assume that $f_H \neq f_L$. This allows both players to individually learn the state of the nature asymptotically via their own private signals. Finally, we assume full support: $f_\theta(x) > 0$ for all $\theta \in \Theta$ and for all $x \in X$. Hence, no private signal fully reveals the state of nature.

%==================================================
\paragraph{Histories and Strategies.}
%==================================================
At every period $t$, each player $i$ receives a private signals $x_{it}$ and then chooses action $a_{it}$. 
Under our assumption of perfect monitoring of actions, player $i$'s \textit{private history} before taking an action $a_{it}$ at period $t$ is hence $h_{it} = (a^{t-1},x_{i}^{t})$, where we denote by $a^{t-1} = (a_{1},a_{2},\ldots,a_{t-1}) \in A^{t-1}$ the past action profiles and by $x_{i}^{t} = (x_{i1},x_{i2},\ldots,x_{it}) \in X^{t}$ player $i$'s private signals. Let $H_{it} = A^{t-1} \times X^t$ be the set of all period-$t$ private histories for player $i$. Note that action history $a^{t-1}$ is publicly known. Figure \ref{fig: timeline} depicts the sequence of events. A \textit{pure strategy} for player $i$ is a function $s_{i} : \bigcup_{t} H_{it} \rightarrow A_i$. Let $S_{i}$ denote the set of all player $i$'s pure strategies. We equip this set with the product topology. Let $\Sigma_i = \Delta(S_i)$ denote the set of mixed strategies, which are the Borel probability measures on $S_i$. 

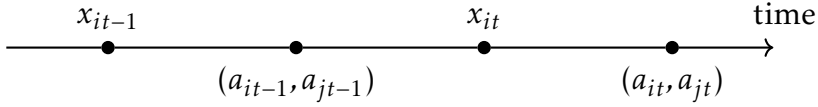
\begin{figure}
\centering
\begin{tikzpicture}[]
\node[]{}
child[level distance=15mm, grow=right, thick]{
    node[solid node, inner sep=1.5pt, label=above:{$x_{it-1}$}]{}
    child[level distance=25mm, grow=right, thick]{
        node[solid node, inner sep=1.5pt, label=below:{$(a_{it-1},a_{jt-1})$}]{}
        child[level distance=25mm, grow=right, thick]{
            node[solid node, inner sep=1.5pt, label=above:{$x_{it}$}]{}
            child[level distance=25mm, grow=right, thick]{
                node[solid node, inner sep=1.5pt, label=below:{$(a_{it},a_{jt})$}]{}
                child[level distance=15mm, grow=right, thick, ->]{
                    node[label=above:{time}]{}
                }
            }
        }
    }
};
\end{tikzpicture}
\caption{Timeline}
\label{fig: timeline}
\end{figure}

%==================================================
\paragraph{Payoffs and equilibria.} 
%==================================================
Both players evaluate infinite streams of payoffs via discounted sum where the discount factor is $\delta \in (0,1)$. Hence, in the infinitely repeated game, if the players choose a path of actions $(a_t)_t \in A^{\infty}$, then player $i$ receives long-run payoff $\sum_{t \geq 1}(1-\delta)\delta^{t-1} u_i(a_t,\theta)$ when the state is $\theta$. Hence, under a (mixed) strategy profile $\sigma$, the long-run expected payoff to the player is 
\begin{align}
\mathbb{E}_{\sigma}\Big[\displaystyle\sum_{t \geq 1}(1-\delta)\delta^{t-1} u_i(a_t,\theta)\Big]
\end{align}
where the expectation operator, $\mathbb{E}_{\sigma}$, is defined with respect to the probability measure $\mathbb{P}_{\sigma}$ induced by the strategy profile $\sigma$ on all possible paths of the repeated interaction $\Omega = \Theta \times (X^2)^{\infty}\times A^{\infty}$. As usual, a strategy profile is a Nash equilibrium if no player has a profitable deviation. In our setting, we show that for every Nash equilibrium there exists an outcome-equivalent sequential equilibrium (Lemma~\ref{lemma:sequential}).

%==================================================
\section{Equilibrium Analysis}
\label{section: equilibrium analysis}
%==================================================
In this section, we study equilibria of the infinitely repeated
game. Our goal is to establish the existence of equilibria that are
efficient. We shall consider two notions of efficiency, namely
\textit{action} efficiency and \textit{payoff} efficiency. Action
efficiency requires that player actions converge almost surely to the
correct actions, i.e., both players eventually invest (resp.\ not
invest) under state $\theta = H$ (resp.\ $\theta = L$). Payoff
efficiency requires that players obtain payoffs close to efficient
payoffs, i.e., in state $\theta = H$ (resp.\ $\theta = L$), ex-ante
long run payoffs of both players are close to $1-c$ (resp.\ close to
$0$). The main result is that action efficiency can always be obtained
in equilibrium and payoff efficient equilibria exist for discount
factor $\delta$ close to one.

%==================================================
\subsection*{One-period Example}
%==================================================
To informally illustrate the nature of incentives that arise in our
coordination framework, we first consider the scenario when there is
only a single period of interaction. Suppose that players follow a
threshold rule to coordinate on actions. That is, for each player $i$,
there exists a threshold belief $\bar{p}_i$ such that player $i$
invests (does not invest) if his belief about the state $\theta = H$
is above (below) the threshold. For simplicity, let us further assume
that no posterior belief after the one period signal equals the
threshold. Now, if player $i$'s belief $p_i$ is above his threshold
and he chooses to invest, then he obtains a payoff equal to
\begin{align*}
%     p_i
%     \bigl(
%         (1-c)
%         \mathbb{P}[p_j > \bar{p}_{j} \mid \theta = H]
%     +   (-c)
%         \mathbb{P}[p_j \le \bar{p}_{j} \mid \theta = H]
%     \bigr)
% +   (1-p_i)
%     (-c)
% =
    p_i\mathbb{P}[p_j > \bar{p}_j \mid \theta = H] - c,
\end{align*}
where $\mathbb{P}[p_j > \bar{p}_j \mid \theta = H]$ is the probability
that player $j$ invests conditional on the state $\theta = H$. Since
player $i$ obtains payoff of $0$ from not investing, he has an
incentive to invest only if this probability is high enough. Moreover,
even if it were the case that the other player invests for sure in the
state $H$, player $i$ would still need his belief $p_i$ to be above
$c$ in order to find investment optimal. Hence, when player $j$
invests with high probability in the state $H$, the threshold of
player $i$ would be close to $c$. Likewise, if player $j$ invests with
low probability in the state $H$, the threshold would be close to
one. There may hence arise multiple equilibria. There could be an
equilibrium with low thresholds where both players invest with high
probability in the state $H$. There could also be an equilibrium with
high thresholds where this probability is low or even zero.%FN
\footnote{This multiplicity arises from the assumption that investment
  cannot be strictly dominant in either state, unlike global games
 \citep{carlsson1993global}.}

Note the roles played by both a player's belief about the state, and
by his belief about his opponent's beliefs. If player $i$ believes
that the state is $\theta = H$ with probability $1$, his payoff from
investing is $\mathbb{P}[p_j > \bar{p}_j \mid \theta = H] - c$.  Thus,
he invests only if he believes that player $j$ invests with
probability at least $c$.  On the other hand, if he believes that
player $j$ chooses to invest with probability $1$, his payoff from
investing is $p_i - c$.  Thus, he invests only if he believes that the
state is $\theta = H$ with probability at least $c$. This demonstrates
that not only do players need high enough beliefs about the state $H$
in order to invest, but also need to believe with high probability
that their opponents hold high beliefs as well. Both players invest
only if they acquire ``approximate common knowledge'' of the state
being $\theta = H$ \citep{monderersamet}. This naturally raises the
question of whether an analogue of this insight obtains in the
repeated game we consider here.

Now, imagine that the first period signals were informationally very
precise. This would allow a very high probability of investment in the
state $H$ in the low threshold equilibrium. This suggests that if such
precise signals were the result of accumulating a series of less
precise ones, say received over time, the probability of coordinating
on the right actions would be very high. In the infinitely repeated
game, such a situation can arise, for example, when players choose to
not invest and individually accumulate their private signals, followed
by an attempt to coordinate with a low threshold.

In the dynamic context, because players are forward looking, such a
threshold rule would also depend on how players behave in the
future. For example, if non-investment triggers both players to not
invest forever, the thresholds for investment would be low. This is
because non-investment would yield a payoff of $0$ whereas investment
would yield expected payoff for the current period and a non-negative
future continuation payoff (because a player can always guarantee
himself a payoff of $0$ by never investing on his own).\footnote{In
  the model of \cite{angeletos}, once sufficiently many players invest
  (or attack a status quo), they receive payoffs and the game
  ends. Hence, the issue of ``repeated'' coordination studied here
  does not arise in their work.} However, having non-investment
trigger players not investing forever is not compatible with action
efficiency, which requires the players to invest eventually in the
high state. Thus action efficient equilibria must somehow incentivize
players to revert back to investing.

%==================================================
\subsection{Cool-off Strategies}
%==================================================
We will focus on a particular class of strategies which we call \textit{cool-off strategies}, in which play involves two kinds of phases: \emph{cool-off phases} and \emph{investment phases}. In a cool-off phase, both players accumulate private signals while playing action profile $(N,N)$ for a certain period of time. After the cool-off phase, they enter an investment phase in which each player chooses action $I$ with positive probability. In an investment phase, if the players' play action profile $(I,I)$ then they remain in the investment phase, otherwise they restart a new cool-off phase.

Cool-off strategies are based on the notion of a \textit{cool-off scheme}, which is defined as a partition $(\mathcal{C},\mathcal{I})$ of the set of all action histories: the set of cool-off histories $\mathcal{C}$ and the set of investment histories $\mathcal{I}$. This classification is carried out by means of a \textit{cool-off function} $T: \mathbb{N} \rightarrow \mathbb{N}$ which labels each action history $a^{t}$ as being part of a cool-off phase or not. Formally, this is done as follows.

\begin{definition}\label{def:cool-off scheme}
A cool-off scheme $(\mathcal{C},\mathcal{I})$ induced by a cool-off function $T : \mathbb{N} \rightarrow \mathbb{N}$ is recursively defined as follows:
\begin{enumerate}
\item The empty action history is in $\mathcal{C}$.
\item For an action history $a^t$ such that $t \geq 1$:
\begin{enumerate}
\item Suppose that $a^{t-1} \in \mathcal{I}$. Then, $a^{t} \in \mathcal{I}$ if $a_t = (I,I)$ and $a^t \in \mathcal{C}$ otherwise.
\item Suppose that $a^{t-1} \in \mathcal{C}$ and that there exists a subhistory of $a^{t}$ in $\mathcal{I}$.
Let $a^{s-1}$ be the longest such subhistory. Then, $a^t \in \mathcal{C}$ if $t \le s + T(s) - 1$ and $a^t \in \mathcal{I}$ if $t > s + T(s) - 1$.
%If $t \le s + T(s) - 1$ then $a^t \in \mathcal{C}$. Otherwise, if $t > s + T(s) - 1$, then $a^t \in \mathcal{I}$.
\item Suppose that $a^{t-1} \in \mathcal{C}$ and that there does not exist a subhistory of $a^{t}$ in $\mathcal{I}$. Then, $a^t \in \mathcal{C}$ if $t \le T(1)$ and $a^t \in \mathcal{I}$ if $t > T(1)$.
\end{enumerate}
\end{enumerate}
Each series of action histories in the sets $\mathcal{C}$ and $\mathcal{I}$ is called a \emph{cool-off phase} and an \emph{investment phase} respectively.
\end{definition}

Figure \ref{fig:cool-off} depicts a cool-off scheme. Initially, players are in a cool-off phase for $T(1)$ periods. At period $T(1)+1$, they are in the investment phase. If both players choose to invest, then they stay in the investment phase. If at some point during the investment phase, say at period $s$, at least one player chooses to not invest, then another cool-off phase begins at this action history $a^{s}$ for $T(s)$ periods. Another investment phase starts at period $s+T(s)$.%FN
\footnote{Players may never invest in an investment phase.
For example, if action $a_{T(1)+1} \neq (I,I)$ then this investment phase ends immediately, and a new cool-off phase starts at period $T(1)+2$.}

As is clear from Definition \ref{def:cool-off scheme}, the cool-off scheme $(\mathcal{C},\mathcal{I})$ induced by any cool-off function $T$ satisfies both $\mathcal{C} \cap \mathcal{I} = \varnothing$ and $\mathcal{C} \cup \mathcal{I} = \bigcup_{t \ge 1} A^{t}$. Further, whether an action history $a^{t}$ is in a cool-off phase or in an investment phase depends only on the cool-off function $T$ but not on strategies.

\begin{figure}[t]
\centering
\begin{tikzpicture}[->,>=stealth',node distance=22mm,semithick]
\tikzstyle{every node}=[draw=none,inner sep=0mm,ellipse,minimum height=10mm,font=\footnotesize]

\node[ellipse,fill=blue!15,minimum width=17.5mm] (C11) [] {$a^{1}$};
\node[ellipse,fill=blue!15,minimum width=17.5mm] (C12) [right of=C11] {$a^{2}$};
\node[minimum size=6mm] (C1d) [right of=C12] {{\normalsize ...}};
\node[ellipse,fill=blue!15,minimum width=17.5mm] (C13) [right of=C1d] {$a^{T(1)}$};
\node[ellipse,fill=red!15,minimum width=17.5mm] (I11) [below right of=C12] {$a^{T(1)+1}$};
\node[ellipse,fill=red!15,minimum width=17.5mm] (I12) [right of=I11] {$a^{T(1)+2}$};
\node[minimum size=6mm] (I1d) [right of=I12] {{\normalsize ...}};
\node[ellipse,fill=red!15,minimum width=17.5mm] (I13) [right of=I1d] {$a^{s-1}$};
\node[ellipse,fill=blue!15,minimum width=17.5mm] (C21) [below right of=I12] {$a^{s}$};
\node[ellipse,fill=blue!15,minimum width=17.5mm] (C22) [right of=C21] {$a^{s+1}$};
\node[minimum size=6mm] (C2d) [right of=C22] {{\normalsize ...}};
\node[ellipse,fill=blue!15,minimum width=17.5mm] (C23) [right of=C2d] {$a^{s+T(s)-1}$};
\node[ellipse,fill=red!15,minimum width=17.5mm] (I21) [below right of=C22] {$a^{s+T(s)}$};
\node[minimum size=6mm] (I2d) [right of=I21] {{\normalsize ...}};

\draw [->] (C11) -- (C12);
\draw [->] (C12) -- (C1d);
\draw [->] (C1d) -- (C13);
\draw [->] (C13) -- (I11);
\draw [->] (I11) -- (I12);
\draw [->] (I12) -- (I1d);
\draw [->] (I1d) -- (I13);
\draw [->] (I13) -- node [left] {$a_{s} \neq (I,I)$} (C21);
\draw [->] (C21) -- (C22);
\draw [->] (C22) -- (C2d);
\draw [->] (C2d) -- (C23);
\draw [->] (C23) -- (I21);
\draw [->] (I21) -- (I2d);
\end{tikzpicture}
\caption{Cool-off Schemes}
\label{fig:cool-off}
\end{figure}
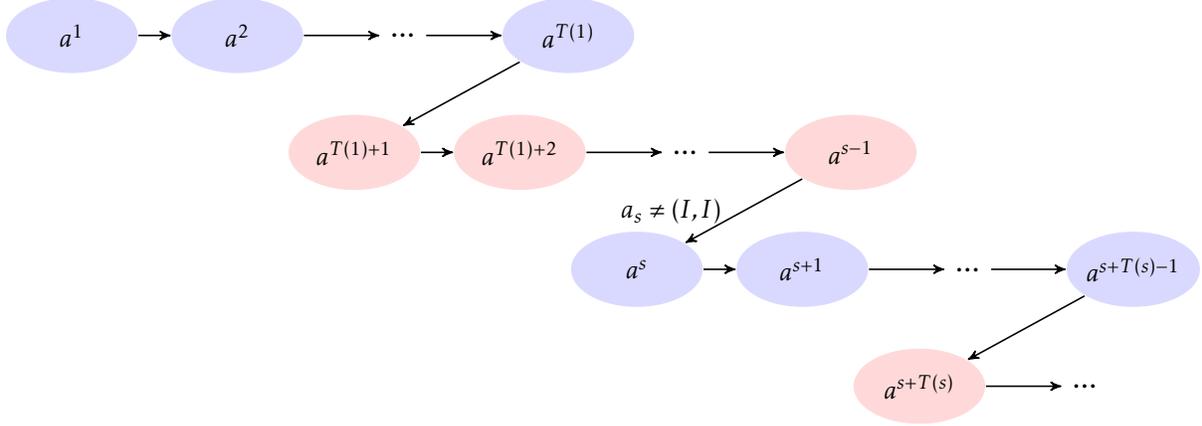

\begin{definition}\label{lem:cool-off scheme}
Fix a cool-off function $T$. We define the set $S_{i}(T)$ as the set of all player $i$'s pure strategies $s_{i} \in S_{i}$ with the following properties:
\begin{enumerate}
\item During a cool-off phase player $i$ chooses action $N$ for any realization of their private signals. That is, if $a^{t-1} \in \mathcal{C}$, then $s_{i}(a^{t-1},x_{i}^{t}) = N$ for any $x_{i}^{t} \in X^{t}$.
\item In any action history $a^{t-1}$ in which any player chose $I$ during a cool-off phase at some time $\tau< t$, player $i$ chooses $N$, i.e., $s_i(a^{t-1},x_i^t)= N$.
\end{enumerate}
Let $\Sigma_i(T)$ be the set of mixed strategies whose support is contained in $\Sigma_i(T)$.
\end{definition}

The following claim is immediate.
\begin{claim}\label{clm:cool-offs-closed}
Fix a cool-off function $T$. Then, the set $\Sigma_i(T)$ is a closed and convex subset of the set $\Sigma_i$.
\end{claim}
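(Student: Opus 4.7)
The plan is to separate the two assertions and handle each with standard facts about the weak$^*$ topology on probability measures over a compact metric space. The set $S_i$ is the countable product $\{I,N\}^{\bigcup_{t} H_{it}}$, so with the product topology it is a compact metrizable space, and $\Sigma_i = \Delta(S_i)$ carries the weak$^*$ topology (this is the topology implicit in the paper's use of a fixed-point theorem on $\Sigma_i$).

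For convexity of $\Sigma_i(T)$, I would argue directly. If $\sigma_1, \sigma_2 \in \Sigma_i(T)$ and $\lambda \in [0,1]$, then $\mathrm{supp}(\lambda \sigma_1 + (1-\lambda)\sigma_2) \subseteq \mathrm{supp}(\sigma_1) \cup \mathrm{supp}(\sigma_2) \subseteq S_i(T)$, so the convex combination still has support in $S_i(T)$.

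For closedness, the key intermediate step is that $S_i(T)$ is a closed subset of $S_i$. Each of the two defining conditions of $S_i(T)$ in Definition~\ref{lem:cool-off scheme} prescribes the value of $s_i$ at certain coordinates $(a^{t-1}, x_i^t)$, namely those coordinates determined by the cool-off function $T$ and by whether a prior deviation has occurred during a cool-off phase. Since $T$ and the cool-off scheme do not depend on the strategy, these coordinates are fixed in advance, and the condition ``$s_i(a^{t-1}, x_i^t) = N$'' is the preimage of $\{N\}$ under the continuous projection onto that coordinate. Thus $S_i(T)$ is an intersection of (countably many) closed cylinder sets, hence closed and in fact compact.

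Given that $S_i(T)$ is closed, closedness of $\Sigma_i(T)$ in the weak$^*$ topology is immediate from the portmanteau theorem. Indeed, $\Sigma_i(T) = \{\sigma \in \Sigma_i : \sigma(S_i(T)) = 1\}$. If $\sigma_n \to \sigma$ weakly and $\sigma_n(S_i(T)) = 1$ for all $n$, then by the portmanteau theorem applied to the closed set $S_i(T)$,
\begin{equation*}
\sigma(S_i(T)) \ge \limsup_{n} \sigma_n(S_i(T)) = 1,
\end{equation*}
so $\sigma \in \Sigma_i(T)$. I do not expect any serious obstacle; the only point that requires a moment of care is verifying that the coordinates at which $s_i$ is pinned down by the definition of $S_i(T)$ depend only on $T$ and on the public action history, and not on the strategy itself, so that $S_i(T)$ is genuinely the intersection of cylinders on fixed coordinates.
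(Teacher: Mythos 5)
Your proof is correct and takes essentially the same route as the paper's: the paper's terse argument asserts precisely that weak-* limits and convex combinations of measures supported on $S_i(T)$ remain supported on $S_i(T)$, which is your portmanteau and union-of-supports argument spelled out. The only difference is that you make explicit the step the paper leaves implicit---that $S_i(T)$ is closed because its defining conditions fix the value $N$ at coordinates determined solely by $T$ and the public action history, so $S_i(T)$ is an intersection of closed cylinder sets---and you are right that this is the one point requiring care.
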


As stated above, the investment phase corresponds to action histories in which players would invest with positive probability. We have not required that this be the case in Definition \ref{lem:cool-off scheme}. For example, the trivial equilibrium, in which both players never invest, belongs to the class $\Sigma_{i}(T)$. Ideally, we would like strategies in the set $\Sigma_1(T) \times \Sigma_2(T)$ to be such that \textit{whenever} the players are in an investment phase, there is a large probability of investment. Hence, we focus on a smaller class of strategies, which we introduce after the following definitions.

A private history $(a^{t-1},x_i^t)$ is said to be {\em consistent} with a pure strategy $s_i$ if for all $\tau  < t$ it holds that $s(a^{\tau-1},x_i^\tau)=a_i^\tau$.  A private history $(a^{t-1},x_i^t)$ is said to be consistent with a mixed strategy $\sigma_i$ if it consistent with some pure strategy $s_i$ in the support of $\sigma_i$. Here, by support we mean the intersection of all closed subsets of $S_i$ which have probability one under $\sigma_i$; this is equivalent to requiring that for each time $t$ the restriction of $s_i$ to time periods $1$ to $t$ is chosen with positive probability. Given a private history $(a^{t-1},x_i^t)$ that is consistent with $\sigma_i$, we can define the (in a sense) conditional probability of taking action $I$, which we denote by $\sigma_i(a^{t-1},x_i^t)(I)$, as
\begin{align*}
    \sigma_i(a^{t-1},x_i^t)(I) = \frac{\sigma_i(s_i \text{ is consistent with } (a^{t-1},x_i^t) \text{ and } s_i(a^{t-1},x_i^t) = I)}{\sigma_i(s_i \text{ is consistent with } (a^{t-1},x_i^t))}\cdot
\end{align*}
Note that this is simply the probability of choosing $I$ at private history $(a^{-1},x_i^t)$ under the behavioral strategy that corresponds to $\sigma_i$ \citep[see, e.g., ][]{osborne1994course}. We say that a action history $a^{t-1}$ is consistent with $\sigma_i$ if $(a^{t-1},x_i^t)$ is consistent with $\sigma_i$ for some $x_i^t$ that occurs with positive probability. Since private signals are conditionally independent, we can define the conditional probability of investing at a consistent $a^{t-1}$ in the high state as
\begin{align*}
    \sigma_i(a^{t-1})(I|H) = \frac{\displaystyle\sum_{x^t_i \in X^t} f^H(x^t_i) \Big[\prod_{\tau=1}^{t-1} \sigma_i(x_i^\tau,a^{\tau-1})(a_{i\tau})\big(\sigma_i(x^t_i,a^{t-1})(I)\big)\Big]}{\displaystyle\sum_{x^t_i \in X^t} f^H(x^t_i) \Big[\prod_{\tau=1}^{t-1} \sigma_i(x_i^\tau,a^{\tau-1})(a_{i\tau})\Big]}\cdot
\end{align*}
This is well-defined because $a^{t-1}$ is consistent with $\sigma_i$. The important property that we will need, is that for any strategy profile $\sigma=(\sigma_i,\sigma_j)$ in which a given action history $a^{t-1}$ is  reached with positive probability, it holds that
\begin{align}
\label{eq:conditional-probs}
    \mathbb{P}_{\sigma}\left[a_{it}=I \mid \theta = H, a^{t-1}\right] =\sigma_i(a^{t-1})(I|H).
\end{align}
Hence, the conditional probability $\mathbb{P}_{\sigma}\left[a_{it}=I \mid \theta = H, a^{t-1}\right]$ is independent of $\sigma_j$. We show this formally in the appendix (see Claim \ref{claim:equivcondition}).

\begin{definition}
Fix a cool-off function $T$ and a constant $\varepsilon > 0$. We define the set $\Sigma_i(T,\varepsilon) \subset \Sigma_i(T)$ as the set of all player $i$'s \emph{cool-off strategies} $\sigma_i$ with the following properties:
\begin{enumerate}
\item $\sigma_i \in \Sigma_i(T)$.
\item In an investment phase, player $i$ invests with probability at least $1-\varepsilon$, conditioned on the state $\theta = H$. That is, for any action history $a^{t-1}\in \mathcal{I}$ which is consistent with $\sigma_i$,
\begin{align}
    \sigma_i(a^{t-1})(I|H) \geq 1-\varepsilon. 
\label{eq:coolstrat}
\end{align}
\end{enumerate}
\end{definition}
As we noted above in \eqref{eq:conditional-probs}, the second condition is equivalent to
\begin{align*}
    \mathbb{P}_{\sigma}\left[a_{it}=I \mid \theta = H, a^{t-1}\right] \geq 1-\varepsilon.
\end{align*}

The following claim is again straightforward to prove.

\begin{claim}\label{clm:legal-closed-convex}
Fix a cool-off function $T$ and a constant $\varepsilon > 0$. Then, the set $\Sigma_i(T,\varepsilon)$ is a closed and convex subset of the set $\Sigma_i$. 
\end{claim}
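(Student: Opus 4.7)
The plan is to reduce the ratio condition \eqref{eq:coolstrat} to a family of linear inequalities in $\sigma_i$, and then invoke continuity and convexity of the resulting half-spaces.

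For each $a^{t-1} \in \mathcal{I}$ and each signal sequence $x_i^t \in X^t$, the set
\[
    E_{x_i^t} = \{s_i \in S_i : s_i \text{ is consistent with } (a^{t-1},x_i^t)\}
\]
and its subset $E^I_{x_i^t} = \{s_i \in E_{x_i^t} : s_i(a^{t-1},x_i^t) = I\}$ are clopen in the product topology on $S_i$, since membership is determined by the values of $s_i$ at finitely many private histories. The behavioral-to-mixed equivalence that the paper uses above the definition yields
\[
    N(a^{t-1},\sigma_i) := \sum_{x_i^t \in X^t} f^H(x_i^t)\, \sigma_i(E^I_{x_i^t}), \qquad
    D(a^{t-1},\sigma_i) := \sum_{x_i^t \in X^t} f^H(x_i^t)\, \sigma_i(E_{x_i^t}),
\]
which coincide with the numerator and denominator in the paper's expression for $\sigma_i(a^{t-1})(I\mid H)$. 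When $D(a^{t-1},\sigma_i) > 0$, i.e.\ when $a^{t-1}$ is consistent with $\sigma_i$, condition \eqref{eq:coolstrat} is equivalent to the linear inequality $N(a^{t-1},\sigma_i) \ge (1-\varepsilon)D(a^{t-1},\sigma_i)$; when $D(a^{t-1},\sigma_i)=0$ both $N$ and $D$ vanish, so this same linear inequality holds vacuously and the condition in the definition is vacuously satisfied. Therefore
\[
    \Sigma_i(T,\varepsilon) = \Sigma_i(T) \cap \bigcap_{a^{t-1}\in\mathcal{I}} \Big\{\sigma_i \in \Sigma_i : N(a^{t-1},\sigma_i) \ge (1-\varepsilon) D(a^{t-1},\sigma_i)\Big\}.
\]

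For each fixed $a^{t-1}$, both $N(a^{t-1},\cdot)$ and $D(a^{t-1},\cdot)$ are finite nonnegative linear combinations of the evaluation maps $\sigma_i \mapsto \sigma_i(E^I_{x_i^t})$ and $\sigma_i \mapsto \sigma_i(E_{x_i^t})$, which are linear in $\sigma_i$ and (since the relevant sets are clopen) weakly continuous. Hence the set carved out by the inequality $N - (1-\varepsilon)D \ge 0$ is a closed half-space in $\Sigma_i$, in particular closed and convex. The displayed intersection is therefore closed and convex, and combining it with $\Sigma_i(T)$ (which is closed and convex by Claim~\ref{clm:cool-offs-closed}) proves the claim.

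The only delicate point is the handling of histories inconsistent with $\sigma_i$, where the definition's condition is vacuous but the ratio $N/D$ is ill-defined; the reformulation above absorbs this case cleanly by observing that $N(a^{t-1},\sigma_i) \le D(a^{t-1},\sigma_i)$, so the linear inequality is trivially satisfied whenever $D$ vanishes and thus extends the ratio constraint to all of $\Sigma_i$ without loss.
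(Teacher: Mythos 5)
Your proof is correct, and it takes a genuinely different route from the paper's. The paper treats closedness and convexity as two separate arguments: closedness by a sequential limit argument (if $a^{t-1}$ is consistent with the weak-* limit $\sigma_i^*$, the denominator of $\sigma_i^*(a^{t-1})(I\mid H)$ is strictly positive, hence so is the denominator for $\sigma_i^n$ along the tail of the sequence, and the ratio passes to the limit), and convexity by a probabilistic detour---fixing an auxiliary opponent strategy $\bar{\sigma}_j$, rewriting \eqref{eq:coolstrat} as $\mathbb{P}\left[a_{it}=I \mid \theta=H, a^{t-1}\right] \ge 1-\varepsilon$ under the induced measure, and exploiting that the induced measure of the mixture is $\lambda \mathbb{P} + (1-\lambda)\mathbb{P}'$, which is in effect the mediant inequality. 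You instead clear the denominator once and for all: after identifying the paper's telescoping behavioral products with the measures of the clopen cylinder sets $E_{x_i^t}$ and $E^I_{x_i^t}$ (valid under the paper's support convention, with consistency of $a^{t-1}$ equivalent to $D(a^{t-1},\sigma_i)>0$ precisely because these sets are clopen and $f^H$ has full support), the constraint becomes the single linear inequality $N \ge (1-\varepsilon)D$, and $\Sigma_i(T,\varepsilon)$ is exhibited as the intersection of $\Sigma_i(T)$ (Claim~\ref{clm:cool-offs-closed}) with closed half-spaces cut out by weak-* continuous linear functionals of $\sigma_i$. This buys both closedness and convexity in one stroke, and it also handles the inconsistent-history edge case ($D=0$, where the ratio is undefined but the definitional requirement is vacuous) more transparently than the paper, whose convexity argument implicitly rests on the same cross-multiplication. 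The only steps worth spelling out---and you do gesture at both---are the telescoping identity equating the behavioral-product numerator and denominator with $\sigma_i(E^I_{x_i^t})$ and $\sigma_i(E_{x_i^t})$, and the equivalence between the paper's support-based notion of consistency and positivity of $\sigma_i(E_{x_i^t})$, which holds exactly because a clopen set meets the support if and only if it has positive measure.
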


%==================================================
\subsection{Cool-off Equilibria as Fixed Points}
%==================================================
We have introduced the key class of strategies.
In what follows, we will find an equilibrium in this class. To do this, we make an appropriate choice of a cool-off function $T$ and an $\varepsilon$, and restrict the players to choose strategies from $\Sigma_i(T,\varepsilon)$. Since this set is closed and convex (Claim~\ref{clm:legal-closed-convex}), we can appeal to a standard fixed point argument to show the existence of an equilibrium. We next show that this equilibrium in restricted strategies is, in fact, still an equilibrium when the restriction is removed. Finally, we show that in this equilibrium, players will eventually coordinate on the right action. Moreover, when the players are very patient, the equilibrium achieves close to efficient payoffs.

\begin{lemma}[Continuation Values]\label{lem:pure}
Let $\sigma_j$ be any mixed strategy of player $j$. Fix an action history $a^{t-1}$. Then the continuation value of player $i$ at the action history $a^{t-1}$ is a non-decreasing convex function of his belief in the state $\theta = H$.
\end{lemma}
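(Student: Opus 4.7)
The plan is to express $V_i(p)$ as a pointwise supremum of affine functions of $p$, which yields convexity immediately, and then to deduce monotonicity from two simple observations about the endpoints.

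Fix $\sigma_j$ and the action history $a^{t-1}$. For each continuation strategy $s_i^{\text{cont}}$ of player $i$ and each state $\theta\in\{H,L\}$, let $g_\theta(s_i^{\text{cont}})$ denote the expected discounted continuation payoff to player $i$ under the profile $(s_i^{\text{cont}},\sigma_j)$ conditional on state $\theta$ and action history $a^{t-1}$. For any belief $p$ that $\theta=H$,
$$V_i(s_i^{\text{cont}},p)\;=\;p\,g_H(s_i^{\text{cont}})\;+\;(1-p)\,g_L(s_i^{\text{cont}})$$
is affine in $p$. Hence $V_i(p):=\sup_{s_i^{\text{cont}}}V_i(s_i^{\text{cont}},p)$ is a pointwise supremum of affine functions and is therefore convex in $p$.

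Two endpoint facts follow from the payoff tables. (i) Playing $N$ yields stage payoff $0$ in both states regardless of the opponent's action, so the constant ``play $N$ forever'' continuation strategy has $g_H=g_L=0$; consequently $V_i(p)\ge 0$ for all $p\in[0,1]$. (ii) In state $L$, playing $I$ yields stage payoff $-c<0$ while $N$ yields $0$ (both independent of the opponent's action), so $N$ strictly dominates $I$ in state $L$; therefore $g_L(s_i^{\text{cont}})\le 0$ for every continuation strategy, whence $V_i(0)=\sup_{s_i^{\text{cont}}}g_L(s_i^{\text{cont}})=0$.

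Combining convexity with (i) and (ii) gives monotonicity. For any $0\le p\le p'\le 1$ with $p'>0$, the identity $p=(p/p')p'+(1-p/p')\cdot 0$ and convexity yield
$$V_i(p)\;\le\;(p/p')V_i(p')+(1-p/p')V_i(0)\;=\;(p/p')V_i(p')\;\le\;V_i(p'),$$
using $V_i(0)=0$ and $V_i(p')\ge 0$; the case $p=p'=0$ is trivial.

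There is no substantive obstacle; the only subtlety is making sure $g_\theta(s_i^{\text{cont}})$ is well-defined, since the conditional distribution of player $j$'s past private signals given $(a^{t-1},\theta)$ implicitly depends on the strategy that generated $a^{t-1}$. Once this reference distribution is held fixed (as it implicitly is when one refers to ``the'' continuation value at $a^{t-1}$), the result is a standard decision-theoretic fact: the value of a decision problem with a hidden binary state is a convex function of the prior, and under our payoff structure it is moreover anchored at zero on the $\theta=L$ endpoint, which together with non-negativity forces monotonicity on $[0,1]$.
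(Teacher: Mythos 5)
Your proof is correct and follows essentially the same route as the paper's: both express the continuation value as a supremum of functions affine in the belief $p$ (hence convex), and both derive monotonicity from the two facts that $v_L \le 0$ for every continuation strategy while the never-invest strategy yields $v_H = v_L = 0$. The only difference is cosmetic---the paper concludes via a non-negative subderivative at $p=0$ propagated by convexity, whereas you make the same point with an explicit chord inequality using $V_i(0)=0$ and $V_i \ge 0$.
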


\begin{lemma}[Threshold Rule]\label{lem:threshold}
For each $c, \delta \in (0,1)$, let $T_{0} \in \mathbb{N}$ be large enough and $\varepsilon > 0$ small enough that
\begin{align}
    \label{eq:threshold}
    \delta^{T_0} < (1-\delta)(1-c-\varepsilon).
\end{align}
Let a cool-off function $T$ be such that $T(t) \ge T_0$ for all $t\geq 2$.
Fix player $j$'s strategy $\sigma_j \in \Sigma_j(T,\varepsilon)$ and an action history $a^{t-1} \in \mathcal{I}$.
Then, for every best response of player $i$, there exists a threshold $\pi$ such that at action history $a^{t-1}$ player $i$ invests (resp.\ does not invest) if her belief in the state $\theta = H$ is above (resp.\ below) the threshold $\pi$.
\end{lemma}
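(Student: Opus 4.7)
The plan is to compare the continuation values $V_I(p)$ and $V_N(p)$ of player $i$ at the investment history $a^{t-1}$ when she plays $I$ or $N$, respectively, given her belief $p$ in state $H$, followed by optimal play. The best response plays $I$ iff $V_I(p) \ge V_N(p)$, so the threshold property follows once single-crossing of $V_I - V_N$ is established.

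I would first bound $V_N(p)$ from above by $\delta^{T_0}(1-c)p$. Playing $N$ triggers a cool-off phase of length $T(t) \ge T_0$; during cool-off the opponent plays $N$ (by $\sigma_j \in \Sigma_j(T)$), and player $i$'s optimal response is also $N$ (playing $I$ against $N$ yields stage payoff $-c$ and, by clause 2 of the definition of $\Sigma_j(T)$, triggers the opponent to play $N$ forever thereafter). Hence cool-off contributes zero stage payoffs, and the post-cool-off continuation is bounded by the per-period maximum, at most $p(1-c)$ in expectation.

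Next, I would bound $V_I(p)$ from below by $(1-\delta)(p(1-\varepsilon) - c)$: the constraint $\sigma_j(a^{t-1})(I\mid H) \ge 1-\varepsilon$ yields expected stage payoff from $I$ of at least $p(1-c-\varepsilon) - (1-p)c = p(1-\varepsilon) - c$, and the continuation is $\ge 0$ via the ``play $N$ forever'' option. Combining,
\begin{equation*}
V_I(p) - V_N(p) \ge p\bigl[(1-\delta)(1-\varepsilon) - \delta^{T_0}(1-c)\bigr] - (1-\delta)c,
\end{equation*}
and the coefficient of $p$ is strictly positive by the hypothesis $\delta^{T_0} < (1-\delta)(1-c-\varepsilon)$, which implies $\delta^{T_0}(1-c) < \delta^{T_0} < (1-\delta)(1-c-\varepsilon) \le (1-\delta)(1-\varepsilon)$. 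Thus for $p$ above $\pi^* := (1-\delta)c/[(1-\delta)(1-\varepsilon) - \delta^{T_0}(1-c)]$, $V_I(p) > V_N(p)$ strictly, so every best response plays $I$ there.

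The hard part will be to establish the full threshold characterization---that if the best response plays $N$ at some belief it also plays $N$ at every smaller belief---which amounts to single-crossing of $V_I - V_N$. I would decompose
\begin{equation*}
V_I(p) - V_N(p) = (1-\delta)(p\eta_H - c) + \delta\,\Pr(a_{jt}=I\mid p)\cdot\bigl[V^*(p_+) - V^{C}(p_+)\bigr],
\end{equation*}
where $\eta_H = \sigma_j(a^{t-1})(I\mid H)$, $p_+$ is the posterior after observing $a_{jt}=I$, and $V^*, V^C$ are the optimal value functions at the subsequent investment and cool-off histories, respectively (the identical $(N,N)$ and $(I,N)$ continuations cancel between $C_I$ and $C_N$ since after either the players enter identical cool-off phases). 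The stage-payoff term is strictly increasing in $p$ with slope $(1-\delta)\eta_H \ge (1-\delta)(1-\varepsilon) > 0$. The main obstacle is verifying monotonicity of the continuation term, which rests on the sign and monotonicity of $V^*(q) - V^C(q)$; I expect this to follow from the convexity of both value functions (via Lemma~\ref{lem:pure}), the monotonicity of Bayesian updating $p \mapsto p_+$, and the martingale property of beliefs, though the argument is delicate and a priori must contend with the lack of an imposed ordering between $\eta_H$ and $\sigma_j(a^{t-1})(I\mid L)$ within $\Sigma_j(T,\varepsilon)$.
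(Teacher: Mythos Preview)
Your first half is correct and parallels the paper's bounds: playing $N$ triggers a cool-off of length at least $T_0$, giving $V_N(p) \le \delta^{T_0}p(1-c)$, and ``invest once then never'' gives $V_I(p) \ge (1-\delta)(p(1-\varepsilon)-c)$, so above $\pi^*$ every best response invests. The gap is in the ``hard part.'' Your decomposition is not valid as stated: $\sigma_j \in \Sigma_j(T,\varepsilon)$ may condition on the full action history in and after the cool-off phase, so the opponent's continuation after $a_t=(I,N)$ need not coincide with that after $a_t=(N,N)$, and these terms do not cancel. Even granting the decomposition, you give no argument for monotonicity of $\Pr(a_{jt}=I\mid p)\,[V^*(p_+)-V^C(p_+)]$ in $p$; as you yourself note, $\eta_L$ is unrestricted in $\Sigma_j(T,\varepsilon)$, so neither $p\mapsto p_+$ nor the conditioning probability need move in the direction you want.

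The paper sidesteps single-crossing entirely by using the convexity of the optimal value function $v(p)$ from Lemma~\ref{lem:pure}. Argue by contradiction: suppose a best response invests at $p_I$ but does not invest at some $p_N > p_I$. The state-contingent payoffs $(v_H^I,v_L^I)$ and $(v_H^N,v_L^N)$ of the optimal continuation strategies at these two beliefs give supporting lines of the convex function $v$ at $p_I$ and $p_N$ respectively; convexity then forces $v_H^I \le v_H^N$. But $v_H^N \le \delta^{T_0}$ by your own bound on $V_N$, so $v_H^I \le \delta^{T_0} < (1-\delta)(1-c-\varepsilon) \le v_H^*$ by condition~\eqref{eq:threshold}, where $(v_H^*,v_L^*)$ are the payoffs of the ``invest once then never'' strategy. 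Also $v_L^I \le -(1-\delta)c = v_L^*$, since any strategy that invests at $a^{t-1}$ pays $-c$ in state $L$ at period $t$ and at most $0$ thereafter. Hence the ``invest once'' strategy strictly dominates the chosen investing strategy at every positive belief, contradicting its optimality at $p_I$. The missing idea is precisely this use of convexity to convert the ordering $p_I < p_N$ into the payoff ordering $v_H^I \le v_H^N$, after which your bounds finish the job without any single-crossing analysis.
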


Lemma \ref{lem:threshold} implies that for an appropriate choice of $T$ and $\varepsilon$, it holds in every equilibrium of the game restricted to $\cS(T,\varepsilon) = \Sigma_1(T,\varepsilon) \times \Sigma_2(T,\varepsilon)$ that both players play threshold strategies: in each history each player has a threshold such that he invests above it and does not below it. Thus, when a player observes another player investing, he updates upwards (in the sense of stochastic dominance) his belief regarding the other player's signal. A consequence of this is the following corollary, which states this idea formally.

In what follows, we shall denote by $p_{it}$ the private belief of player $i$ at period $t$ before choosing action $a_{it}$. The belief $p_{it}$ is computed based on player $i$'s private history $h_{it} = (a^{t-1},x_{i}^{t})$. Hence, $p_{it} = \mathbb{P}[\theta = H \mid h_{it}]$. Finally, for convenience, we shall sometimes write the expectation operator $\mathbb{E}_{\sigma}$ and probability function $\mathbb{P}_{\sigma}$ without the subscript $\sigma$ simply as $\mathbb{E}$ and $\mathbb{P}$.

\begin{corollary}[Monotonicity of Expected Beliefs]\label{cor:monotone}
For each $c, \delta \in (0,1)$, let $T_{0} \in \mathbb{N}$ be large enough and $\varepsilon > 0$ small enough so as to satisfy condition \eqref{eq:threshold} of Lemma \ref{lem:threshold}. Let $T$ be a cool-off function such that $T(t)\geq T_0$ for all $t \geq 2$. Then, in every equilibrium of the game restricted to $\Sigma_1(T,\varepsilon) \times \Sigma_2(T,\varepsilon)$ it holds that if $a^{s-1}$ extends $a^{t-1}$ with both players investing in periods $t,t+1,\ldots,s-1$, then
\begin{align*}
    \mathbb{E}\bigl[p_{i s} \mid \theta=H, a^{s-1}\bigr]
\ge \mathbb{E}\bigl[p_{i t} \mid \theta=H, a^{t-1}\bigr].
\end{align*}
\end{corollary}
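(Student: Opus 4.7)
The plan is to reduce the statement, by induction on $s-t$, to the one-step inequality
\begin{equation*}
\mathbb{E}\bigl[p_{i,t+1} \mid \theta=H,\,a^{t-1},\,a_t=(I,I)\bigr]\ \ge\ \mathbb{E}\bigl[p_{it} \mid \theta=H,\,a^{t-1}\bigr],
\end{equation*}
valid whenever $a^{t-1}\in\mathcal{I}$ is consistent with the equilibrium strategies and $a_t=(I,I)$. Definition~\ref{def:cool-off scheme} guarantees that appending $(I,I)$ preserves the investment phase, so iterating this one-step inequality over the periods $t,t+1,\ldots,s-1$ delivers the corollary.

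For the one-step inequality I would split the Bayesian update into two pieces: first conditioning on $a_{jt}=I$, then on $i$'s new signal $x_{i,t+1}$. By Lemma~\ref{lem:threshold}, $j$ plays a threshold rule, so $\{a_{jt}=I\}=\{\Lambda_j\ge K\}$ where $\Lambda_j=\prod_{\tau\le t}f_H(x_{j\tau})/f_L(x_{j\tau})$ and $K$ depends only on $a^{t-1}$. Letting $\mu_L$ denote the conditional law of $x_j^t$ given $(\theta=L,a^{t-1})$, one has $\int\Lambda_j\,d\mu_L=1$ (since the corresponding $H$-law equals $\Lambda_j\cdot\mu_L$), and an elementary Neyman--Pearson-type computation shows $\int_{\Lambda_j\ge K}(\Lambda_j-1)\,d\mu_L\ge 0$. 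Hence $\alpha_H:=\mathbb{P}[a_{jt}=I\mid H,a^{t-1}]\ge\mathbb{P}[a_{jt}=I\mid L,a^{t-1}]=:\alpha_L$. Since $x_i^t$ and $x_j^t$ are independent given $\theta$ (so $a_{jt}\perp x_i^t\mid \theta,a^{t-1}$), Bayes' rule says updating by $a_{jt}=I$ multiplies $p_{it}/(1-p_{it})$ by $\alpha_H/\alpha_L\ge 1$, whence $\tilde p_{it}:=\mathbb{P}[\theta=H\mid a^{t-1},a_{jt}=I,x_i^t]\ge p_{it}$ pointwise in $x_i^t$. For the second piece, the Bayesian posterior is a submartingale under the true state: a Cauchy--Schwarz computation gives $\mathbb{E}[p_{i,t+1}\mid H,a^t,x_i^t]\ge\tilde p_{it}$. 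Averaging over $x_i^t$ conditional on $(H,a^t)$ produces $\mathbb{E}[p_{i,t+1}\mid H,a^t]\ge\mathbb{E}[p_{it}\mid H,a^t]$.

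It remains to replace the conditioning on $a^t=(a^{t-1},(I,I))$ by $a^{t-1}$ on the right-hand side. Because $a_{jt}\perp x_i^t\mid \theta,a^{t-1}$ and $i$'s threshold rule makes $a_{it}=I$ equivalent to $p_{it}\ge\pi_i$,
\begin{equation*}
\mathbb{E}[p_{it}\mid H,a^t]=\mathbb{E}[p_{it}\mid H,a^{t-1},\,p_{it}\ge\pi_i]\ \ge\ \mathbb{E}[p_{it}\mid H,a^{t-1}],
\end{equation*}
the last step being the elementary truncation inequality $\mathbb{E}[X\mid X\ge c]\ge\mathbb{E}[X]$. Concatenating the inequalities closes the one-step case. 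The substantive obstacles are the two MLR-type inputs---the Neyman--Pearson inequality $\alpha_H\ge\alpha_L$ coming from $j$'s threshold rule, and the Cauchy--Schwarz-based submartingale property for the single-signal update---both standard, though one must keep track of the fact that the thresholds $\pi_i$ and $K$ are deterministic functions of $a^{t-1}$. Everything else is routine Bayesian bookkeeping.
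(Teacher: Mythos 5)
Your proposal is correct and takes essentially the same route as the paper's proof: reduce to the one-step case $s=t+1$, strip off the conditioning on $a_{jt}=I$ and $a_{it}=I$ via the threshold rules of Lemma \ref{lem:threshold} (your Neyman--Pearson step and truncation inequality are just more explicit versions of the paper's two stochastic-dominance inequalities), and handle the new signal through the tower property plus the conditional-submartingale property of beliefs, which is exactly the paper's Lemma \ref{lem:condexpineq} and your Cauchy--Schwarz computation. One harmless slip: after conditioning on $a^{t-1}$, the $H$-law of $x_j^t$ is $(Z_L/Z_H)\,\Lambda_j\cdot\mu_L$ for normalizers $Z_\theta$ reflecting $j$'s past play, so $\int\Lambda_j\,d\mu_L = Z_H/Z_L$ rather than $1$; but since $\{a_{jt}=I\}$ remains an upper-level set of the true likelihood ratio $d\mu_H/d\mu_L \propto \Lambda_j$, your conclusion $\alpha_H\ge\alpha_L$ goes through unchanged.
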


The next two propositions are the key ingredient in showing that an equilibrium in our restricted game is also an equilibrium of the unrestricted game.
\begin{proposition}\label{prop:legal-best-response}
For each $c,\delta \in (0,1)$, take $T_0 \in \N$ large enough and $\varepsilon > 0$ small enough such that
\begin{align}
    \label{eq:legal-best-response}
    \frac{c}{(1-\varepsilon)-(1-c)\delta^{T_0}/(1-\delta)}
<   c(1+2\varepsilon)
<   1-\varepsilon.
\end{align}
Let a cool-off function $T: \N \rightarrow \N$ be such that $T(t) \ge T_0$ for all $t \in \N$. Fix player $j$'s strategy $\sigma_j \in \Sigma_j(T,\varepsilon)$ and an action history $a^{t-1} \in \mathcal{I}$. Then, in every best response of player $i$, if
\begin{align*}
    \mathbb{E}\left[p_{i t} \mid \theta = H, a^{t-1}\right]
>   1-\varepsilon^{2},
\end{align*}
then player $i$ invests with probability at least $1-\varepsilon$ in the state $\theta = H$. Moreover, player $i$ invests if his private belief is above $c(1+2\varepsilon)$.
\end{proposition}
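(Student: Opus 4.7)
The plan has three main parts: (i) use Lemma~\ref{lem:threshold} to write best responses as threshold rules, (ii) bound the threshold $\pi$ using the indifference condition together with a tight upper bound on the continuation value of not investing, and (iii) convert the expected-belief hypothesis into the required conditional investment probability via Markov's inequality. I assume throughout that $T_{0}$ is large and $\varepsilon$ small enough that both \eqref{eq:threshold} and \eqref{eq:legal-best-response} hold, which is possible for any $c,\delta\in(0,1)$.

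By Lemma~\ref{lem:threshold}, every best response of player $i$ at the investment-phase history $a^{t-1}$ is a threshold rule: there exists $\pi$ such that $i$ invests iff his belief exceeds $\pi$, and at $\pi$ indifference holds, $V^{I}(\pi)=V^{N}(\pi)$. For a lower bound on $V^{I}(\pi)$, consider the particular strategy ``invest at $t$, then play $N$ forever''; since $u_{i}(N,\cdot,\theta)=0$ always, this yields exactly $(1-\delta)(\pi q_{H}-c)$, where $q_{H}=\sigma_{j}(a^{t-1})(I\mid H)\ge 1-\varepsilon$ (because $\sigma_{j}\in\Sigma_{j}(T,\varepsilon)$ and $a^{t-1}\in\mathcal{I}$). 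For an upper bound on $V^{N}(\pi)$, playing $N$ triggers a cool-off of length $T(t)\ge T_{0}$ in which both players optimally play $N$---deviating to $I$ invokes the punishment clause of Definition~\ref{lem:cool-off scheme}, which makes the deviator strictly worse off---so stage payoffs vanish throughout cool-off. Letting $t'=t+T(t)+1$ be the start of the next investment phase, $p''$ the corresponding belief, and $V^{*}_{t'}$ the optimal continuation there, the key observation is that every stage payoff satisfies $u_{i}(\cdot,L)\le 0$, so the conditional expected continuation in state $L$ is non-positive; hence $V^{*}_{t'}(p'')\le p''(1-c)$. Combined with the martingale identity $\mathbb{E}[p''\mid a_{it}=N,\pi]=\pi$, this gives $V^{N}(\pi)\le \delta^{T(t)+1}(1-c)\pi\le \delta^{T_{0}}(1-c)\pi$.

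Inserting these bounds into the indifference condition, $(1-\delta)(\pi q_{H}-c)\le V^{N}(\pi)\le \delta^{T_{0}}(1-c)\pi$, and using $q_{H}\ge 1-\varepsilon$, rearrangement yields
\begin{align*}
\pi\le \frac{c}{(1-\varepsilon)-(1-c)\delta^{T_{0}}/(1-\delta)}<c(1+2\varepsilon),
\end{align*}
where the strict inequality is the first half of \eqref{eq:legal-best-response}. Thus any belief strictly above $c(1+2\varepsilon)$ exceeds $\pi$ and triggers investment---the second assertion. For the first assertion, Markov's inequality applied to the non-negative random variable $1-p_{it}$ and the hypothesis $\mathbb{E}[1-p_{it}\mid\theta=H,a^{t-1}]<\varepsilon^{2}$ yields $\mathbb{P}[p_{it}\le 1-\varepsilon\mid\theta=H,a^{t-1}]<\varepsilon$. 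Since $c(1+2\varepsilon)<1-\varepsilon$ by the second half of \eqref{eq:legal-best-response}, the event $\{p_{it}>1-\varepsilon\}$ is contained in $\{p_{it}>c(1+2\varepsilon)\}$, on which player $i$ invests. Hence $\mathbb{P}[a_{it}=I\mid\theta=H,a^{t-1}]\ge 1-\varepsilon$, as required.

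The main technical obstacle I anticipate is justifying the continuation-value bound $V^{*}_{t'}(p'')\le p''(1-c)$ for an arbitrary best response of player $i$---in particular, verifying that the state-$L$ conditional continuation remains non-positive regardless of off-path behavior---and confirming that the martingale identity $\mathbb{E}[p''\mid a_{it}=N,\pi]=\pi$ survives conditioning on player $i$'s strategically chosen action at period $t$. The remainder of the argument reduces to routine algebra using the stated conditions.
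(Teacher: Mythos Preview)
Your argument is correct and follows essentially the same route as the paper's. Two simplifications worth noting: the paper does not invoke Lemma~\ref{lem:threshold}---it simply lets $p$ be the supremum of beliefs at which player~$i$ does not invest and bounds that $p$ directly---and it obtains the not-investing bound $p\,\delta^{T_0}(1-c)$ by conditioning on the state (stage payoffs are at most $1-c$ in state $H$ and at most $0$ in state $L$), so no martingale identity for $p''$ is needed and your anticipated obstacle never arises.
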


\begin{proof}
Denote by $p$ the supremum of the beliefs in which player $i$ does not invest at the action history $a^{t-1}$. Given the belief $p$, the payoff from investing is at least $(1-\delta)(-c+p(1-\varepsilon))$, while the payoff from not investing is at most $p\delta^{T_0}(1-c)$. Thus, we have that $(1-\delta)(-c+p(1-\varepsilon)) \leq p\delta^{T_0}(1-c)$. Rearranging the terms, we have
\begin{align*}
    p
\le \frac{c}{(1-\varepsilon)-(1-c)\delta^{T_0}/(1-\delta)},
\end{align*}
which implies that $p < c(1+2\varepsilon)$ by condition \eqref{eq:legal-best-response}.
Hence, player $i$ invests if his private belief $p_{it}$ is above $c(1+2\varepsilon)$.

It follows from Markov's inequality that
\begin{align*}
	\mathbb{P}(p_{it} \le p \mid \theta=H, a^{t-1})
=	\mathbb{P}(1-p_{it} \ge 1-p \mid \theta=H, a^{t-1})
\le	\frac{\mathbb{E}[1-p_{it} \mid \theta=H, a^{t-1}]}{1-p}.
\end{align*}
Since $p < c(1+2\varepsilon)$ and $\mathbb{E}[p_{it} \mid \theta=H, a^{t-1}] > 1-\varepsilon^2$,
\begin{align*}
	\frac{\mathbb{E}[1-p_{it} \mid \theta=H, a^{t-1}]}{1-p}
<	\frac{1-(1-\varepsilon^2)}{1-c(1+2\varepsilon)}
<   \varepsilon,
\end{align*}
where the latter inequality follows by condition \eqref{eq:legal-best-response}.
Hence, $\mathbb{P}(p_{it} \le p \mid \theta=H, a^{t-1}) \leq \varepsilon$. That is, player $i$ invests with probability at least $1-\varepsilon$.
\end{proof}

\begin{proposition}\label{prop:legal-best-response2}
For each $c,\delta \in (0,1)$, let $T_{0} \in \N$ and $\varepsilon >0$ be such that condition \eqref{eq:threshold} in Lemma \ref{lem:threshold} is satisfied. Then, there is a cool-off function $T: \N \to \N$ with $T(t) \geq T_0$ for each $t \geq 2$ such that, in every equilibrium strategy profile $\sigma \in \Sigma_1(T,\varepsilon) \times \Sigma_2(T,\varepsilon)$, it holds in every action history $a^{t-1} \in \mathcal{I}$ that for each $i=1,2$,
\begin{align*}
    \mathbb{E}_{\sigma}\bigl[p_{i t} \mid \theta = H, a^{t-1}\bigr]   > 1-\varepsilon^2.
\end{align*}
\end{proposition}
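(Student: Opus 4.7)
The plan is to (i) reduce the claim to bounding $\mathbb{E}_{\sigma}[p_{it^{\ast}}\mid\theta=H,a^{t^{\ast}-1}]$ at the initial history $t^{\ast}$ of each investment phase, and (ii) exploit the preceding cool-off phase---whose length $T(s)$ is free for us to choose---to drive this conditional expectation arbitrarily close to $1$. For the reduction, Corollary~\ref{cor:monotone} says that in any equilibrium of the restricted game the conditional expected belief is non-decreasing along investment-phase histories, so it suffices to verify the inequality at the first history of each investment phase, namely $t^{\ast}=T(1)+1$ (whose preceding history is deterministic) or $t^{\ast}=s+T(s)$ for each $s$ at which some earlier investment phase broke.

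For the main step, I would use that during a cool-off phase both players play $N$ regardless of their signals, so in state $\theta=H$ the cool-off signals are i.i.d.\ draws from $f_{H}$, conditionally independent of the action history. Writing $R_{it}=(1-p_{it})/p_{it}$ and $Z_{ik}=f_{L}(x_{ik})/f_{H}(x_{ik})$, this yields the multiplicative decomposition $R_{i(s+T(s))}=R_{is}\prod_{k=s+1}^{s+T(s)}Z_{ik}$. Under $\theta=H$, the $Z_{ik}$ are i.i.d., uniformly bounded (by finiteness of $X$ and full support), with $\mathbb{E}[Z_{ik}\mid\theta=H]=1$ and $\mathbb{E}[\log Z_{ik}\mid\theta=H]=-D(f_{H}\Vert f_{L})<0$. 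Hoeffding's inequality then yields, for some constant $c>0$ depending only on $f_H,f_L$,
\begin{align*}
\mathbb{P}_{\sigma}\Bigl[\prod_{k=s+1}^{s+T(s)}Z_{ik}>e^{-T(s)D(f_{H}\Vert f_{L})/2}\,\Big|\,\theta=H,\,a^{s+T(s)-1}\Bigr]\le e^{-c\,T(s)}.
\end{align*}
Truncating on this event versus its complement and using $1-p_{it}\le R_{it}$ produces
\begin{align*}
\mathbb{E}_{\sigma}\bigl[1-p_{i(s+T(s))}\mid\theta=H,\,a^{s+T(s)-1}\bigr]\le \mathbb{E}_{\sigma}\bigl[R_{is}\mid\theta=H,\,a^{s-1}\bigr]\cdot e^{-T(s)D(f_{H}\Vert f_{L})/2}+e^{-c\,T(s)}.
\end{align*}

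A short Bayesian manipulation shows $\mathbb{E}_{\sigma}[R_{is}\mid\theta=H,a^{s-1}]=\tfrac{1-p_{0}}{p_{0}}\cdot\mathbb{P}_{\sigma}[a^{s-1}\mid L]/\mathbb{P}_{\sigma}[a^{s-1}\mid H]$, and I would bound this crudely above by some $M(s)$ depending only on $s$ using uniform bounds on the per-period likelihood contributions from signals and actions. I would then define $T$ so that $T(t)\ge T_{0}$ for all $t\ge 2$ and $T(s)$ grows with $s$ fast enough that $M(s)\,e^{-T(s)D(f_{H}\Vert f_{L})/2}+e^{-c\,T(s)}<\varepsilon^{2}$ for every $s$, with $T(1)$ handled analogously (since the history preceding the first investment phase is deterministic). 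The hard part will be the lack of a uniform-in-$a^{s-1}$ bound on $\mathbb{E}_{\sigma}[R_{is}\mid\theta=H,a^{s-1}]$: action histories in which breaks occurred on low-probability events in state $H$ can inflate this conditional expectation, which is why the cool-off length must be allowed to grow with $s$ rather than be bounded by a single constant.
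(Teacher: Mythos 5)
Your proposal is correct and follows essentially the same route as the paper's proof: reduce to the first history of each investment phase via Corollary~\ref{cor:monotone}, obtain a history-uniform, $s$-dependent control on how unfavorable beliefs can be when a cool-off begins at period $s$ (the paper uses the sure pointwise lower bound $\underline{p}_{s}$ implied by bounded likelihood ratios, while you bound $\mathbb{E}_{\sigma}[R_{is}\mid\theta=H,a^{s-1}]$ by an $M(s)$ and add a Hoeffding estimate), and then choose $T(s)$ growing with $s$ so that the conditional expected belief at the end of each cool-off exceeds $1-\varepsilon^{2}$. Your closing observation---that no constant cool-off length can work because the conditional odds at a break can be inflated by low-probability histories under $H$, forcing $T(s)$ to grow in $s$---is precisely the consideration driving the paper's construction; your version simply makes explicit the quantitative learning estimate that the paper's ``take $T(s)$ large enough'' step leaves implicit.
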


\begin{proof}
Fix $T_{0} \in \N$ and $\varepsilon > 0$. Since private signals are bounded, for each $s \geq 1$, there exists $\underline{p}_{s} > 0$ such that for each $i=1,2$,
\begin{align*}
\mathbb{P}[p_{is} > \underline{p}_s] = 1.
\end{align*}
irrespective of the strategies of the players. For example, $\underline{p}_{s}$ could be defined as the belief that a player would have if he observed both players received the worst signal repeatedly up to period $s$, where the worst signal is defined as the one which has the lowest likelihood ratio. Note that $\underline{p}_s$ only depends on the period $s$ and not on the action history or the private history of any player.

We now construct the cool-off function $T$. For each $s \geq 1$, define $T(s) \geq T_0$ large enough to be such that if a cool-off phase were to begin at period $s$ and a player's belief were $\underline{p}_{s}$, then when the cool-off phase ends at time period $s + T(s) - 1$, the expectation of a player's private belief conditional on $\theta = H$ would be above $1-\varepsilon^{2}$.

For this cool-off function $T$, from Corollary \ref{cor:monotone} it follows that for each equilibrium strategy profile $\sigma \in \Sigma_1(T,\varepsilon) \times \Sigma_2(T,\varepsilon)$ and action history $a^{t-1} \in \mathcal{I}$, we have that
\begin{align*}
    \mathbb{E}_{\sigma}\bigl[p_{i t} \mid \theta = H, a^{t-1}\bigr]
>   1-\varepsilon^{2},
\end{align*}
which completes the proof.
\end{proof}

Now we prove the existence of equilibria in cool-off strategies.
Applying a fixed-point theorem to the game restricted to $\Sigma_{1}(T,\varepsilon) \times \Sigma_{2}(T,\varepsilon)$, we find an equilibrium $\sigma^{\ast} \in \cS(T,\varepsilon)$.
We then show that this equilibrium remains an equilibrium of the unrestricted game, using Propositions \ref{prop:legal-best-response} and \ref{prop:legal-best-response2}.

\begin{proposition}\label{prop:fixedpoint}
For each $c,\delta \in (0,1)$, let $T_0$ be large enough and $\varepsilon$ small enough so as to satisfy condition \eqref{eq:threshold} in Lemma \ref{lem:threshold} and condition \eqref{eq:legal-best-response} in Proposition \ref{prop:legal-best-response}. Then, there exists a cool-off function $T$ with $T(t) \geq T_0$ for all $t \geq 2$ and there exists an equilibrium $\sigma^* = (\sigma_i^*,\sigma_j^*)$ such that $\sigma_i^* \in \Sigma_i(T,\varepsilon)$ and $\sigma_j^* \in \Sigma_j(T,\varepsilon)$. 
\end{proposition}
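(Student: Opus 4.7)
The plan is to carry out the two-step strategy outlined just before the statement: first, a fixed-point argument yields an equilibrium of the game restricted to the set $\Sigma_1(T,\varepsilon) \times \Sigma_2(T,\varepsilon)$; second, Propositions \ref{prop:legal-best-response} and \ref{prop:legal-best-response2} are used to show that this restricted equilibrium is in fact a Nash equilibrium of the unrestricted game.

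For the first step I would take $T$ to be the cool-off function constructed in the proof of Proposition \ref{prop:legal-best-response2}, which satisfies $T(t) \ge T_0$ for all $t \ge 2$. The set $\Sigma_i(T,\varepsilon)$ is closed and convex by Claim \ref{clm:legal-closed-convex}. Since $S_i$ is a product of finite sets and therefore compact in the product topology, $\Sigma_i = \Delta(S_i)$ is compact in the weak-$*$ topology, and so $\Sigma_i(T,\varepsilon)$ is compact. It is also nonempty: the pure strategy that plays $I$ at every investment-phase history where no player has previously invested during a cool-off phase and plays $N$ elsewhere lies in $\Sigma_i(T,\varepsilon)$, since it trivially satisfies \eqref{eq:coolstrat} with probability one. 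Because per-period payoffs are bounded and $\delta < 1$, the discounted expected payoff $u_i(\sigma)$ is the uniform limit of its finite-horizon truncations, each of which is continuous in $\sigma$; hence $u_i$ is continuous on $\Sigma_1 \times \Sigma_2$. Linearity of $u_i$ in the own mixed strategy ensures that each best-response correspondence is convex-valued, and Berge's theorem delivers upper hemicontinuity. Glicksberg's fixed-point theorem then produces an equilibrium $\sigma^* = (\sigma_1^*,\sigma_2^*) \in \Sigma_1(T,\varepsilon) \times \Sigma_2(T,\varepsilon)$ of the restricted game.

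For the second step it suffices to show that for each player $i$, the supremum of $u_i(\cdot,\sigma_j^*)$ over the full strategy space $\Sigma_i$ equals the supremum over $\Sigma_i(T,\varepsilon)$, which is achieved by $\sigma_i^*$. By Proposition \ref{prop:legal-best-response2} applied to the restricted equilibrium $\sigma^*$, we have $\mathbb{E}_{\sigma^*}[p_{it} \mid \theta = H, a^{t-1}] > 1 - \varepsilon^2$ at every $a^{t-1} \in \mathcal{I}$. Proposition \ref{prop:legal-best-response} then implies that any unrestricted best response of player $i$ against $\sigma_j^*$ invests with probability at least $1-\varepsilon$ conditional on $H$ at every such history, so \eqref{eq:coolstrat} is automatic. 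At any $a^{t-1} \in \mathcal{C}$, since $\sigma_j^*$ plays $N$ throughout cool-off phases and, by condition~2 of Definition \ref{lem:cool-off scheme}, plays $N$ forever once any player has invested during a cool-off phase, deviating to $I$ costs $-c$ now and yields zero continuation payoff, which is strictly worse than playing $N$; hence any best response plays $N$ on path during cool-off. Modifying off-path behavior to play $N$ in cool-off histories and after any cool-off deviation leaves the induced distribution over outcomes unchanged, so every unrestricted best response is outcome-equivalent to an element of $\Sigma_i(T,\varepsilon)$. Therefore $\sigma_i^*$ attains the unrestricted supremum, and $\sigma^*$ is a Nash equilibrium of the unrestricted game.

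I expect the main obstacle to be closing the circular dependence in the second step: the belief condition of Proposition \ref{prop:legal-best-response2} assumes that one already has an equilibrium in cool-off strategies, while Proposition \ref{prop:legal-best-response}'s conclusion that best responses are themselves cool-off strategies depends on that belief condition. The fixed point from the first step is precisely what breaks this circularity, by producing a concrete $\sigma^*$ at which the belief condition holds and thereby unlocking the best-response characterization.
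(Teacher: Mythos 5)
Your proposal is correct and follows essentially the same route as the paper's own proof: Glicksberg's theorem applied to the game restricted to $\Sigma_1(T,\varepsilon) \times \Sigma_2(T,\varepsilon)$ (using Claim \ref{clm:legal-closed-convex} for closedness and convexity, and continuity of discounted payoffs in the weak-$*$ topology), followed by Propositions \ref{prop:legal-best-response} and \ref{prop:legal-best-response2} to conclude that any best response to a strategy in $\Sigma_j(T,\varepsilon)$ lies (outcome-equivalently) in $\Sigma_i(T,\varepsilon)$, so the restricted equilibrium survives unrestricted deviations. You in fact supply details the paper leaves implicit---nonemptiness of the restricted set, the cool-off incentive argument that deviating to $I$ yields $-c$ today and at most zero thereafter, and the off-path outcome-equivalence modification---so no changes are needed.
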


%==================================================
\subsection{Efficiency}
%==================================================
\begin{theorem}[Action Efficiency]\label{thm:efficiency}
For each $c,\delta \in (0,1)$, there exist a cool-off function $T$, a constant $\varepsilon > 0$ such that for any equilibrium $\sigma^* \in \cS(T,\varepsilon)$, players eventually choose the right actions. That is, the equilibrium action profile converges to $(I,I)$ in the state $\theta=H$ and to $(N,N)$ in the state $\theta=L$ almost surely.
\end{theorem}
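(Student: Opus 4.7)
The plan is to combine almost sure convergence of private beliefs with the threshold structure of best responses from Lemma~\ref{lem:threshold} and Proposition~\ref{prop:legal-best-response}, handling the two states separately.

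First I would show that each player's private belief $p_{it}$ converges almost surely to $\mathbf{1}\{\theta = H\}$. The sequence $(p_{it})$ is a bounded martingale on player $i$'s private filtration, hence converges almost surely. Because signals are i.i.d.\ with $f_H \neq f_L$, the state is measurable with respect to the tail $\sigma$-algebra of player $i$'s signal stream alone (via the strong law applied to the log-likelihood ratio), so the limit must equal $\mathbf{1}\{\theta = H\}$.

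For the state $\theta = H$, on the almost-sure event that $p_{it} \to 1$ for both $i = 1,2$, there is a random time $T^{*}$ after which $\min\{p_{1t}, p_{2t}\} > c(1 + 2\varepsilon)$. By Proposition~\ref{prop:legal-best-response}, whenever $a^{t-1} \in \mathcal{I}$ and $p_{it} > c(1 + 2\varepsilon)$, player $i$ plays $I$. The cool-off phase (if any) containing time $T^{*}$ terminates in finitely many additional periods because each $T(s)$ is finite; upon its end we enter an investment phase with both beliefs above $c(1 + 2\varepsilon)$, so both players play $I$, preserving $a^{t} \in \mathcal{I}$. Induction then gives $a_{t} = (I, I)$ for all subsequent $t$.

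For the state $\theta = L$, on the almost-sure event that $p_{it} \to 0$, it suffices to exhibit a positive constant $\pi_{0}$ (depending only on $c, \delta, \varepsilon$) such that in every equilibrium of the restricted game the investment threshold at every $a^{t-1} \in \mathcal{I}$ satisfies $\pi(a^{t-1}) \ge \pi_{0}$; then eventually $p_{it} < \pi_{0}$ and player $i$ plays $N$ in every investment-phase period (and also in every cool-off period, by construction), so $(a_{1t}, a_{2t}) = (N, N)$ eventually. To establish the lower bound I would upper bound the continuation value at any future history by $p_{it}(1-c)$, using that the per-period payoff is at most $1-c$ in state $H$ and at most $0$ in state $L$ and that beliefs form a martingale; this implies that the total payoff from investing at belief $p$ is at most $(1-\delta)(p-c) + \delta p(1-c) = p(1-\delta c) - (1-\delta)c$, which is strictly less than the payoff from not investing once $p$ is below some explicit positive $\pi_{0}$, provided the non-investment payoff is bounded below by a quantity that does not vanish with $p$.

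The hard part is the lower bound on the threshold in state $L$: within the restricted cool-off class the continuation value after not investing is not obviously non-negative, since the strategy is required to invest with conditional probability at least $1-\varepsilon$ in state $H$ during investment phases, so along rare paths in state $L$ where private beliefs briefly spike, the strategy may invest and incur losses. The argument must show that these potential losses, weighted by their small probability and suitably discounted, do not outweigh the strictly negative stage cost of investing at a low-belief history; this follows because the bound $V \le p(1-c)$ forces the \emph{advantage} of investing to vanish linearly in $p$, while the \emph{disadvantage} $-(1-\delta)c$ of the immediate stage loss stays bounded away from zero.
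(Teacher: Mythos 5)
Your belief-convergence step and your state-$H$ argument are essentially the paper's own proof: on the almost-sure event $\{p_{1t},p_{2t}\to 1\}$ you pick a time after which both beliefs exceed $\bar{p}=c(1+2\varepsilon)$, let the current cool-off phase (which is finite) expire, and invoke the ``Moreover'' clause of Proposition \ref{prop:legal-best-response} to conclude that both players invest at the next investment phase and hence forever; the paper argues exactly this way (by contradiction with infinitely many cool-off periods, but the content is identical). The genuine gap is in your state-$L$ step, and you have correctly located it without closing it. You need the value of the option ``play $N$ now'' to be bounded below by $0$, but inside the restricted class $\cS(T,\varepsilon)$ this is not available: after the triggered cool-off, condition \eqref{eq:coolstrat} again forces investment with conditional probability at least $1-\varepsilon$ in state $H$, so the never-invest continuation is infeasible, and your closing claim --- that the advantage of investing vanishes linearly in $p$ while the stage loss $(1-\delta)c$ stays bounded away from zero --- does not resolve this. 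If the not-invest continuation value could be some $-\kappa$ with $\kappa>(1-\delta)c$, investing at arbitrarily low beliefs could still be optimal and no positive threshold $\pi_0$ would exist. The comparison is between two continuations facing the same future constraints, and bounding only the investing side cannot separate them.

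The missing idea, which is how the paper closes this step, is that for the chosen $T$ and $\varepsilon$ every equilibrium of the restricted game is also an equilibrium of the \emph{unrestricted} game: Proposition \ref{prop:legal-best-response2} guarantees $\mathbb{E}[p_{it}\mid\theta=H,a^{t-1}]>1-\varepsilon^{2}$ at every investment history on path, and Proposition \ref{prop:legal-best-response} then shows that every unrestricted best response to $\sigma_j\in\Sigma_j(T,\varepsilon)$ already lies in $\Sigma_i(T,\varepsilon)$ (this is precisely the argument behind Proposition \ref{prop:fixedpoint}). Once unrestricted optimality is in hand, the deviation ``play $N$ forever'' is feasible and yields exactly $0$, so at any on-path history where investing is optimal one gets $(1-\delta)(p-c)+\delta p(1-c)\ge 0$, i.e.\ $p\ge\underline{p}=\frac{c-c\delta}{1-c\delta}>0$; since $p_{it}\to 0$ almost surely in state $L$, beliefs eventually stay below $\underline{p}$ and both players play $N$ forever. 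Your upper bound $(1-\delta)(p-c)+\delta p(1-c)$ on the investment payoff is exactly the paper's; the only repair needed is to replace your last paragraph with an appeal to this restricted-to-unrestricted equivalence, after which your outline coincides with the paper's proof.
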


\begin{proof}
The underlying  probability space for defining the events of interest will be the set of all outcomes $\Omega = \Theta \times A^{\infty}\times (X^2)^{\infty}$. Let $\sigma^* \in \cS(T,\varepsilon)$ be any equilibrium. Define the events
\begin{align*}
A^{\infty}_I = \bigl\{ \omega \in \Omega : \lim_{t\rightarrow \infty} a_t(\omega) = (I,I) \bigr\}, \quad
A^{\infty}_N = \bigl\{ \omega \in \Omega : \lim_{t\rightarrow \infty} a_t(\omega) = (N,N) \bigr\}.
\end{align*}
We wish to show that $\mathbb{P}(A^{\infty}_I \mid \theta = H) = \mathbb{P}(A^{\infty}_N \mid \theta = L) = 1$. Consider the latter equality. It would suffice to show that there exists a $\underline{p} \in (0,1)$ such that no player invests when his private belief is below $\underline{p}$. In the state $\theta = L$, players' private beliefs almost surely converge to $0$ (since they can learn from their own signals) and hence will eventually fall below $\underline{p}$, leading both players to not invest from some point on. We derive such a $\underline{p}$.

When a player has belief $p$, the payoff from investing is at most $(1-\delta) (p-c) + \delta (p(1-c))$ and the payoff from not investing is at least $0$. Hence, if a player finds it optimal to invest at $p$, then
\begin{align*}
    p \geq \displaystyle\frac{c-c\delta}{1-c\delta}.
\end{align*}
Hence, we can define $\underline{p} = \frac{c-c\delta}{1-c\delta}$. It now follows that $\mathbb{P}(A^{\infty}_N \mid \theta = L) = 1$.

We next show $\mathbb{P}(A^{\infty}_I \mid \theta = H) = 1$. Consider the event $B^{\infty} = \{ \omega \in \Omega : \lim_{t\rightarrow \infty} p_{it}(\omega) = 1 \mbox{ for both } i=1,2 \}$. We know that
$\mathbb{P}(B^{\infty} \mid \theta = H) = 1$. Hence, it suffices to show that $B^{\infty} \subseteq A^{\infty}_I$.

Let $\omega \in B^{\infty}$. It suffices to show that on the outcome path induced by $\omega$, there must only be finitely many cool-off periods. Suppose not. Then, there exist infinitely many cool-off periods in $\omega$. Recall from Proposition \ref{prop:legal-best-response} that a player invests with probability one in an investment phase when his private belief is above $\bar{p} = c(1+2\varepsilon)$. Now, since $\omega \in B^{\infty}$, there exists a $\bar{T}$ such that $p_{it}(\omega) > \bar{p}$ for all $t \geq \bar{T}$ for both $i=1,2$. Now, let $t'$ be the starting period of the first cool-off phase after $\bar{T}$. Then, $T(t') + t' \geq \bar{T}$. Hence, when this cool-off period ends at time $s= T(t') + t' - 1$, we enter an investment phase. Since this investment phase at $s+1$ starts beyond period $T'$, the private beliefs of both players would be above $\bar{p}$ and both players would invest in time period $s+1$. Next period, players would continue to be in the investment phase and again invest. Proceeding this way, players would keep on investing which means the investment phase would last forever. This contradicts the fact that there are infinitely many cool-off periods. Hence, there exist only finitely many cool-off periods in $\omega$, which means that players invest forever after some point of time. Hence, $\omega \in A^{\infty}_I$. 
\end{proof}

\begin{theorem}[Payoff Efficiency]
Fix any $c \in (0,1)$. For every $\Delta > 0$, there exists $\bar{\delta} \in (0,1)$ such that for each $\delta \ge \bar{\delta}$, there exist a cool-off function $T$, a constant $\varepsilon > 0$, such that in any equilibrium $\sigma^* \in \cS(T,\varepsilon)$, both players obtain a payoff of at least $1-c-\Delta$ in the state $\theta=H$ and a payoff of at least $-\Delta$ in the state $\theta=L$.
\end{theorem}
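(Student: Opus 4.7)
The idea is to apply Proposition~\ref{prop:fixedpoint} with $\varepsilon=\varepsilon(\delta)$ vanishing \emph{faster} than $1-\delta$, so that in the high state the first investment phase persists well beyond the effective horizon $1/(1-\delta)$ with overwhelming probability, capturing essentially the full payoff $1-c$. Given $\Delta>0$, fix $\eta>0$ with $3\eta(1-c)\le\Delta$. For each $\delta$ close to $1$ set
\begin{align*}
    \varepsilon=\varepsilon(\delta):=\frac{\eta(1-\delta)}{2\log(1/\eta)}, \qquad M=M(\delta):=\left\lceil\frac{\log(1/\eta)}{1-\delta}\right\rceil,
\end{align*}
further reducing $\varepsilon$ if needed so that conditions~\eqref{eq:threshold} and~\eqref{eq:legal-best-response} both hold, and apply Proposition~\ref{prop:fixedpoint} to obtain a cool-off function $T$ and an equilibrium $\sigma^*\in\cS(T,\varepsilon)$. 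By Proposition~\ref{prop:legal-best-response2} the first cool-off length $T(1)$ only needs to be large enough that $\mathbb{E}[p_{i,T(1)+1}\mid H]>1-\varepsilon^2$, so $T(1)=O(\log(1/\varepsilon))$ depends on $\varepsilon$ and the signal distributions but not on $\delta$; hence $T(1)(1-\delta)\to 0$ and eventually $\delta^{T(1)}\ge 1-\eta$.

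\textbf{State $H$ bound.} During the initial cool-off of length $T(1)$ both players play $N$, so the ex-ante payoff in state $H$ equals $\delta^{T(1)}V^H_1$, where $V^H_1$ is the equilibrium continuation value at the start of the first investment phase. Let $A$ be the event that both players choose $I$ in each of the first $M$ periods of that investment phase. Since conditional on $\theta$ and the public action history the two players' private signals (and hence their actions) are independent, Proposition~\ref{prop:legal-best-response} yields
\begin{align*}
    \mathbb{P}(A\mid H)\ge(1-\varepsilon)^{2M}\ge 1-2\varepsilon M\ge 1-\eta.
\end{align*}
On $A$ the payoff is $1-c$ in each of those $M$ periods, and the continuation afterwards is at least $0$ because a player can unilaterally play $N$ forever. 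Using $\delta^M\le e^{-M(1-\delta)}\le\eta$, we obtain $V^H_1\ge\mathbb{P}(A\mid H)(1-c)(1-\delta^M)\ge(1-\eta)^2(1-c)$, so the state-$H$ payoff is at least $(1-\eta)^3(1-c)\ge(1-c)-3\eta(1-c)\ge(1-c)-\Delta$.

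\textbf{State $L$ bound.} By the argument in the proof of Theorem~\ref{thm:efficiency}, a player invests at period $t$ only if $p_{it}\ge\underline p:=c(1-\delta)/(1-c\delta)$. Writing $\ell_t := \sum_{s\le t}\log(f_H(x_{is})/f_L(x_{is}))$, the event $\{p_{it}\ge\underline p\}$ corresponds to $\{\ell_t\ge R_\delta\}$ with $|R_\delta|=O(\log(1/(1-\delta)))$. Since $X$ is finite and $f_H,f_L$ have full support, the increments of $\ell_t$ are uniformly bounded and i.i.d.\ under $L$ with negative mean $-D(f_L\|f_H)$. Hoeffding's inequality then yields $\mathbb{P}(\ell_t\ge R_\delta\mid L)\le e^{-\alpha t}$ for $t\ge t_\ast:=C\log(1/(1-\delta))$, where $\alpha,C>0$ depend only on the signal distributions. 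Hence
\begin{align*}
    |V^L|\le c(1-\delta)\sum_{t\ge 1}\mathbb{P}(a_{it}=I\mid L)\le c(1-\delta)\bigl(t_\ast+O(1)\bigr)=O\bigl((1-\delta)\log(1/(1-\delta))\bigr)\to 0,
\end{align*}
which is at most $\Delta$ for $\delta$ close enough to~$1$.

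\textbf{Main obstacle.} The central calibration is choosing $\varepsilon(\delta)$: we need $M\varepsilon\to 0$ with $M\sim 1/(1-\delta)$ (forcing $\varepsilon=o(1-\delta)$), yet the initial cool-off length $T(1)=O(\log(1/\varepsilon))$ must still satisfy $T(1)(1-\delta)\to 0$. These are simultaneously achievable only because $(1-\delta)\log(1/(1-\delta))\to 0$, and it is essential that Proposition~\ref{prop:fixedpoint} requires $T(t)\ge T_0$ only for $t\ge 2$, leaving $T(1)$ free; $T_0$ itself blows up as $\delta\to 1$ because of condition~\eqref{eq:threshold}, so insisting $T(1)\ge T_0$ would kill the payoff through discounting in the very first cool-off phase.
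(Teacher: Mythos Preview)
Your state-$L$ argument contains a genuine gap. You assert that the event $\{p_{it}\ge\underline p\}$ ``corresponds to'' $\{\ell_t\ge R_\delta\}$, where $\ell_t=\sum_{s\le t}\log(f_H(x_{is})/f_L(x_{is}))$ is the log-likelihood ratio from player $i$'s \emph{own} signals. But $p_{it}=\mathbb{P}(\theta=H\mid a^{t-1},x_i^t)$ is computed from the full private history, which includes the opponent's observed actions; once the first investment phase begins, those actions are informative about $\theta$ and $p_{it}$ is no longer a function of $\ell_t$ alone. Your Hoeffding bound on $\ell_t$ therefore does not control $\mathbb{P}(p_{it}\ge\underline p\mid L)$. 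The gap is not minor: since $\underline p\sim c(1-\delta)\to 0$, even very weak positive inference from the opponent's play can push $p_{it}$ above $\underline p$, and there is no uniform (in $\delta$) bound on how long this can persist that follows from your Hoeffding estimate.

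The paper sidesteps this entirely by a different and much simpler route. It fixes $\varepsilon$ (and hence $T(1)=T_1$) depending only on $\Delta$, \emph{not} on $\delta$, and then argues via the convexity of the continuation value (Lemma~\ref{lem:pure}): after the first cool-off, with probability at least $1-\varepsilon$ under $L$ the belief $p_{iT_1}$ (which \emph{is} a function of private signals only, since no actions have yet been taken) lies below the fixed threshold $\Delta/(2(1-c)+\Delta)$; at any such belief the identities $v(p)=p\,v_H+(1-p)v_L\ge 0$ and $v_H\le 1-c$ force $v_L\ge -\Delta/2$. No control of beliefs beyond period $T_1$ is needed.

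Your state-$H$ argument, by contrast, is essentially correct and offers a legitimate alternative to the paper's. Where the paper uses an optional-stopping/submartingale argument to show that once both beliefs exceed $1-\varepsilon$ they stay above $\bar p=c(1+2\varepsilon)$ \emph{forever} with probability at least $1-2\varepsilon/(1-\bar p)$, you instead take $\varepsilon(\delta)=o(1-\delta)$ and use a finite-horizon union bound over the first $M\sim 1/(1-\delta)$ investment periods. Two small omissions: you should account for the payoff on $A^c$ (at worst $-c$, contributing at most $-\eta c$), and the claim ``continuation afterwards is at least $0$ in state $H$'' needs the extra observation that $v_L\le 0$, not merely that playing $N$ is available. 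Both are easily absorbed into the choice of $\eta$. Note, however, that the whole $\varepsilon(\delta)\to 0$ calibration is unnecessary once one has the paper's optional-stopping bound for state $H$ and the convexity bound for state $L$: a single $\varepsilon$ and a single $T_1$, independent of $\delta$, suffice, and then $\delta^{T_1}\to 1$ trivially as $\delta\to 1$.
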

\begin{proof}
Fix $c,\Delta \in (0,1)$. Let $\varepsilon >0$ be small enough so that $c < c(1+2\varepsilon)(1-\varepsilon) < (1-\varepsilon)^2$ and moreover
\begin{align}
    (1-\varepsilon)^2
    \left(
        \left(1 - \frac{2\varepsilon}{1-\bar{p}}\right)(1-c) + \frac{2\varepsilon}{1-\bar{p}}(-c) 
    \right) > 1-c-\Delta. \label{payoffeffH}
\end{align}
and 
\begin{align}
    (1-\varepsilon)\Big(\displaystyle\frac{-\Delta}{2}\Big) -\varepsilon c > -\Delta. \label{payoffeffL}
\end{align}
where recall from Proposition \ref{prop:legal-best-response} that $\bar{p} = c(1+2\varepsilon)$. In what follows, we will show that players achieve a payoff equal to the LHS of \eqref{payoffeffH} in state $\theta = H$ and a payoff equal to the LHS of \ref{payoffeffL} in state $\theta = L$. Let $T_1 \in \mathbb{N}$ be such that for any strategy profile which  implements a cool-off scheme $T$ with $T(1) = T_1$, it holds that i) $\mathbb{E}[p_{iT_1} \mid \theta = H] > 1- \varepsilon^2$ for all $i = 1,2$; ii) $\mathbb{P}[p_{iT_1} > 1-\varepsilon \mbox{ for each }  i=1,2 \mid \theta = H] > 1-\varepsilon$ ; iii) $\mathbb{P}[p_{iT_1} < \frac{\Delta}{2(1-c)+\Delta} \mbox{ for each }  i=1,2 \mid \theta = H] > 1-\varepsilon$ . Now let $\bar{\delta} \in (0,1)$ be such that $\bar{\delta}^{T_1} > 1-\varepsilon$. Let $\delta > \bar{\delta}$. Now, suppose that $T_0$ is such that condition \eqref{eq:threshold} in Lemma \ref{lem:threshold} and condition \eqref{eq:legal-best-response} in Proposition \ref{prop:legal-best-response} are satisfied for $T_0$ and $\varepsilon$. From Proposition \ref{prop:fixedpoint}, it follows that there exists a cool-off function with $T(1) = T_1$ and $T(t) \geq T_0$ for all $t \geq 2$ and there exists an equilibrium of the repeated game restricted to strategies in $\cS(T,\varepsilon)$. Let $\sigma^*$ be one such equilibrium.

We shall show that in this equilibrium each player obtains a long-run expected payoff of at least $1-c-\Delta$ in the state $\theta = H$ and $-\Delta$ in state $\theta = L$. We first establish the former. Note that we have $1-\varepsilon > c(1+2\varepsilon)$. Hence, if the private beliefs of both players are above $1-\varepsilon$ at the end of the first cool-off phase with length $T_{1}$, and stay above $1-\varepsilon$ forever, both players would invest forever. We shall show that the probability with which this happens is at least $1 - \frac{2\varepsilon}{1-\bar{p}}$. We argue as follows. Suppose that private beliefs for players $1$ and $2$ are $p_1,p_2 \geq 1-\varepsilon$ at time $T_1$. Hence, from period $T_1$ onwards, the probability that both players invest forever is at least 
\begin{align}\label{infinv}
\mathbb{P}\bigl[p_{it} > \bar{p} \textup{ for each $i=1,2$ and $t \geq T_{1}$} \mid p_{iT_1} = p_i, p_{jT_1} = p_j, \theta = H\bigr].
\end{align}
Note that conditional on $p_{iT_1} = p_i, p_{jT_1} = p_j$ and $ \theta = H$, the belief process $p_{it}$ is a bounded submartingale. Now define the stopping time 
\begin{align*}
T_i(\omega) = \min \{t \mid p_{it}(\omega) \leq \bar{p}\}.
\end{align*}
Let $p'_{it}$ be the process $p_{it}$ stopped at time $T_i$. It follows that $p'_{it}$ is a bounded submartingale as well. From the Martingale Convergence Theorem, it converges almost surely to a limit $p'_{i,\infty}$. Hence, the conditional probability in \eqref{infinv} is at least 
\begin{align*}
\mathbb{P}\bigl[p'_{i,\infty} > \bar{p} \textup{ for each $i=1,2$}  \mid p_{iT_1} = p_i, p_{jT_1} = p_j, \theta = H\bigr].
\end{align*}
From the Optional Stopping Theorem, we have that
\begin{align*}
 \mathbb{E}\bigl[p'_{i,\infty} \mid p_{iT_1} = p_i, p_{jT_1} = p_j, \theta = H\bigr] \geq p_i.
\end{align*}
Since $1-p'_{i,\infty}$ is non-negative and $p_i \geq 1-\varepsilon$, from Markov's inequality it follows that for each $i$,
\begin{align*}
 \mathbb{P}\bigl[p'_{i,\infty} \leq \bar{p} \mid p_{iT_1} = p_i, p_{jT_1} = p_j, \theta = H\bigr]  \leq \displaystyle\frac{\varepsilon}{1-\bar{p}}.
\end{align*}
From the above we obtain that the probability that both players will invest from period $T_1$ onwards is at least $1 - \frac{2\varepsilon}{1-\bar{p}}$. Hence, the long-run expected payoff to each player in state $\theta = H$ will be at least equal to the LHS of \eqref{payoffeffH} which is greater than $1-c-\Delta$.

Finally, we show that under $\sigma^*$, the payoff in state $L$ is at most $-\Delta$. To see this, we first make the following observation.  Suppose at the end of the first $T_1$ cool-off periods, a player's private belief is $p \leq \frac{\Delta}{2(1-c) + \Delta}$. Now from Lemma \ref{lem:pure}, we know that a player's optimal continuation value $v(\cdot)$ is a non-decreasing convex function of his private belief $p$. We show that if the optimal continuation strategy of the player yields payoffs $(v_H,v_L)$ at $p$, then $v_L \geq -\frac{\Delta}{2}$. Since $(v_H,v_L)$ is the optimal continuation value at $p$, it follows that the vector $(v_H,v_L)$ supports the convex function $v(\cdot)$ at $p$. Hence, 
\begin{align*}
v(p) = p v_H + (1-p) v_L.
\end{align*}
Note that $v(p) \geq 0$ since a player can always guarantee himself a payoff of zero forever and $v_H \leq 1-c$ since the best payoff a player can get in state $H$ is $1-c$. These, together with the fact that $p \leq \frac{\Delta}{2(1-c) + \Delta}$, imply $v_L \geq -\frac{\Delta}{2}$. Now, since we have $\mathbb{P}[p_{iT_1} < \frac{\Delta}{2(1-c)+\Delta} \mbox{ for each }  i=1,2 \mid \theta = H] > 1-\varepsilon$, the ex-ante payoff in state $L$ to each player is at least equal to the LHS of \eqref{payoffeffL}, which is greater than $-\Delta$. 
\end{proof}

%==================================================
\section{Discussion}
\label{section: discussion}
%==================================================

%==================================================
\subsection{Ex-post Equilibria}
%==================================================
Most of the literature in repeated games assumes that the set of
feasible and individually rational payoffs is full-dimensional.  That
is, the dimension of the set is equal to the number of players.
Analogously, existing work in repeated games of incomplete information
assumes a version of full-dimensionality, called statewise
full-dimensionality \citep[e.g.,][]{wiseman2005partial,
  wiseman2012partial, yamamoto, sugayayamamoto}.  However, our payoff structure does not satisfy this condition.  In addition, two closely related papers to ours,
\cite{yamamoto} and \cite{sugayayamamoto}, focus on a special
class of sequential equilibria, called ex-post equilibria.  Yet, any
ex-post equilibrium is inefficient in our model, and this is due to
the lack of the statewise full-dimensionality.  We discuss each of
these two differences in turn.

%==================================================
\paragraph{Statewise Full-dimensionality.}
%==================================================
For each state $\theta$, we define the set of feasible set of payoffs by
\begin{align*}
	W(\theta)
=	\textup{co}\,
	\Bigl(
	\Bigl\{
		v \in \mathbb{R}^{2} : 
		\exists\, a \in A\quad \forall\, i \in \mathcal{N}\quad u_{i}(a, \theta) = v_{i}
	\Bigr\}
	\Bigr).
\end{align*}
Further, we define player $i$'s minimax payoff in state $\theta$ by
\begin{align*}
	\underline{w}_{i}(\theta)
=	\min_{\alpha_{j} \in \Delta A_{j}} \max_{a_{i} \in A_{i}} u_{i}(a_{i}, \alpha_{j}, \theta)
\end{align*}
and the set of all feasible and individually rational payoffs in state $\theta$ by
\begin{align*}
	W^{\ast}(\theta)
=	\Big\{
		w \in W(\theta) : 
		\forall\, i \in \mathcal{N}\quad w_{i} \ge \underline{w}_{i}(\theta)
	\Big\}.
\end{align*}

Statewise full-dimensionality is said to be satisfied if the set $W^{\ast}(\theta)$ is of dimension equal to the number of players for each $\theta$.%FN
\footnote{When there is no uncertainty of payoff-relevant states,  statewise full-dimensionality is equivalent to the full-dimensionality in the literature of repeated games of complete information.}
The statewise full-dimensionality is assumed in most of the literature of repeated games of incomplete information, but it is not in our model.
As drawn in Figure \ref{fig:statewisefull-dimensionality}, the set $W^{\ast}(H)$ is of dimension $2$, but the set $W^{\ast}(L) = \{(0,0)\}$ is not.
Intuitively speaking, this lack of statewise full-dimensionality means that players' incentives, thus behavior, are quite different from state to state.
This could make it difficult to design an appropriate punishment scheme that supports cooperation.

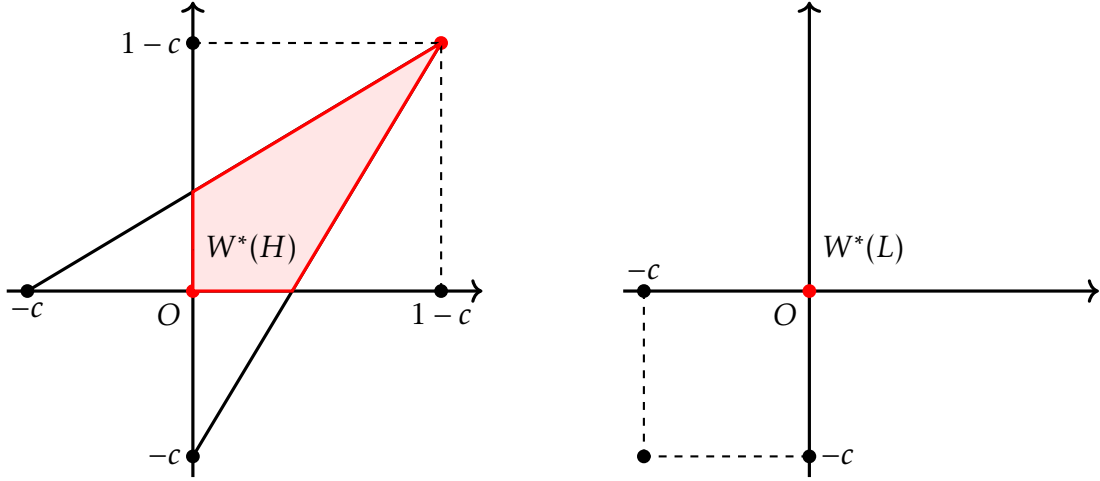
\begin{figure}[t]
\centering
\begin{minipage}{0.49\hsize}
	\centering
	\begin{tikzpicture}[scale=0.55,thick]
	%Axes
	\draw[->,very thick] (-4.5,0) -- (7,0) node[right] {};
	\draw[->,very thick] (0,-4.5) -- (0,7) node[above] {};
	%Nodes
	\draw[red,fill=red,thick] (0,0) circle (4pt) node[] {};
	\draw[] (0,0) node[below left] {$O$};
	\draw[red,fill=red,thick] (6,6) circle (4pt) node[] {};
	\draw[black,fill=black,thick] (6,0) circle (4pt) node[below] {$1-c$};
	\draw[black,fill=black,thick] (0,6) circle (4pt) node[left] {$1-c$};
	\draw[black,fill=black,thick] (-4,0) circle (4pt) node[below] {$-c$};
	\draw[black,fill=black,thick] (0,-4) circle (4pt) node[left] {$-c$};
	%Lines
	\draw[black,very thick] (-4,0) coordinate -- (6,6) coordinate (E);
	\draw[black,very thick] (0,-4) coordinate -- (6,6) coordinate (E);
	%Dashed lines
	\draw[dashed,thick] (6,0) coordinate -- (E);
	\draw[dashed,thick] (0,6) coordinate -- (E);
	%Fill
	\filldraw[draw=red,fill=red!10!white,very thick] (0,0) -- (2.4,0) -- (E) -- (0,2.4) -- (0,0);
	%Legend
	\draw[black,fill=black,thick] (0.0,1.0) circle (0pt) node[right] {$W^{\ast}(H)$};
	\end{tikzpicture}
\end{minipage}
\begin{minipage}{0.49\hsize}
	\centering
	\begin{tikzpicture}[scale=0.55,thick]
	%Axes
	\draw[->,very thick] (-4.5,0) -- (7,0) node[right] {};
	\draw[->,very thick] (0,-4.5) -- (0,7) node[above] {};
	%Nodes
	\draw[red,fill=red,thick] (0,0) circle (4pt) node[] {};
	\draw[] (0,0) node[below left] {$O$};
	\draw[black,fill=black,thick] (-4,-4) circle (4pt) node[] {};
	\draw[black,fill=black,thick] (-4,0) circle (4pt) node[above] {$-c$};
	\draw[black,fill=black,thick] (0,-4) circle (4pt) node[right] {$-c$};
	%Lines
	%Dashed lines
	\draw[dashed,thick] (-4,0) coordinate -- (-4,-4);
	\draw[dashed,thick] (0,-4) coordinate -- (-4,-4);
	%Legend
	\draw[black,fill=black,thick] (0.0,1.0) circle (0pt) node[right] {$W^{\ast}(L)$};
	\end{tikzpicture}
\end{minipage}
\caption{the Lack of Statewise Full-dimensionality}
\label{fig:statewisefull-dimensionality}
\end{figure}

%==================================================
\paragraph{Ex-post Equilibria.}
%==================================================
Two closely related papers to ours, \cite{yamamoto} and \cite{sugayayamamoto}, focus on a special class of sequential equilibria,
called \emph{ex-post equilibria}. An ex-post equilibrium is defined as
a sequential equilibrium in the infinitely repeated game in which the
strategies are such that they would be optimal even if the state
$\theta$ were common knowledge. In an ex-post equilibrium, player
$i$'s continuation play after history $h_{it}$ is optimal regardless
of the state $\theta$ for each $h_{it}$ (including off-path
histories).

No ex-post equilibrium, however, can approximate efficiency in our model. More precisely, in any ex-post equilibrium, neither player invests at any private history. %for any discount factor $\delta$. 
To see this, note that in any ex-post equilibrium, a player's strategy would be optimal if the state $L$ were common knowledge. Since in the state $L$, a player's best response to any strategy of the opponent is to not invest at any private history, the player never invests even in the state $H$. Thus, neither player ever invests in any ex-post equilibrium.

Ex-post equilibria may yield a trivial play if the payoff structure does not satisfy statewise full-dimensionality. Further, they may not exist if a stage game is like a global game. To see this, suppose, for example, that there is another state in which irrespective of the opponent's action, a player gains payoff $1-c$ if he invests and payoff of $0$ if he does not. Then, players always invest if that state were common knowledge, but they would never invest if the state $L$ were common knowledge; therefore, there would not exist ex-post equilibria.

%==================================================
\subsection{Common Learning} 
%==================================================
A coordination game which is very similar to ours is presented as one
of the motivations for the introduction of the concept of common
learning by~\cite{cems1}.  We now explain this concept, and show that,
in our game, common learning is attained in every strategy profile. 

Recall $\Omega := \Theta \times A^{\infty} \times (X^2)^{\infty}$ is the set of all possible outcomes. Any strategy profile $\sigma = (\sigma_i,\sigma_j)$ induces a probability measure $\mathbb{P}$ on $\Omega$ which is endowed with the $\sigma$-algebra $\mathcal{F}$ generated by cylinder sets. For each $F \in \mathcal{F}$ and $q \in (0,1)$, define the event $B^{q}_{it}(F) = \{\omega\in \Omega : \mathbb{P}(F \mid a^{t-1},x_i^t) \geq q\}$. Now, define the following events: 
\begin{enumerate}
\item
$B^{q}_{t}(F) = B^{q}_{it}(F) \cap B^{q}_{jt}(F)$. That is, this is the event that both players assign probability at least $q$ to the event $F$ on the basis of their private information.
\item
$[B^{q}_{t}]^n(F) =
B^{q}_{t}(B^{q}_{t}(B^{q}_{t}...B^{q}_{t}(F)))$. That is, this is the
event that both players assign probability at least $q$ that both players assign probability at least $q$ that... both players assign probability at least $q$ to the event $F$ on the basis of their private information.
\item
$C^{q}_{t}(F) = \bigcap_{n\geq 1} [B^{q}_{t}]^n(F)$.
\end{enumerate}
We say that players have \textit{common $q$-belief} in event F at $\omega$, at time $t$, if $\omega \in C^{q}_{t}(F)$. We say that players \textit{commonly learn} $\theta \in \Theta$ if for all $q\in (0,1)$, there exists a $T$ such that for all $t \geq T$,
\begin{align*}
\mathbb{P}\bigl[C^q_t(\theta) \mid \theta\bigr] > q.
\end{align*}

\begin{proposition}\label{prop:commonlearning}
Let $\sigma$ be any strategy profile which induces the probability measure $\mathbb{P}$. Then, under $\mathbb{P}$, players commonly learn the state $\theta$ for each $\theta \in \Theta$.
\end{proposition}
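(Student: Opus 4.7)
The plan is to reduce the claim to the common learning theorem of Cripps, Ely, Mailath, and Samuelson (\citep{cems1}), whose main result is tailored for precisely the kind of signal structure we have in our model. Under any strategy profile $\sigma$, the induced measure $\mathbb{P}$ on $\Omega$ is well defined, and player $i$'s information at time $t$ decomposes into a public component (the action history $a^{t-1}$) and a private component (the signal history $x_i^t$). By assumption, the private signals $x_{it}$ are conditionally i.i.d.\ across time and independent across players given $\theta$, with full support and $f_H \neq f_L$, so the two states are statistically distinguishable from the private signals alone.

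First I would record the standard consequence that each player individually learns the state: by a martingale/Doob argument applied to the posterior computed only from the private signals, $\mathbb{P}[p_{it}\to \mathbf{1}_{\theta=H}\mid \theta]=1$, and the stronger posterior $\mathbb{P}[\theta\mid a^{t-1},x_i^t]$ converges to $\mathbf{1}_{\theta=H}$ as well (adding public information preserves a.s.\ convergence). This yields mutual $q$-belief in $\theta$ from some time on. The content of common learning is that the same holds for arbitrarily high orders of $q$-belief.

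Second, I would appeal to the common learning theorem of \citep{cems1}: whenever private signals are conditionally independent across players given the state, have full support, and are identifying ($f_H\neq f_L$), players commonly learn $\theta$. The action history $a^{t-1}$ is common knowledge at time $t$, so it can be absorbed into the ``public'' component without disrupting the iterated $q$-belief operator $B_t^q$; intuitively, common observations only enhance common beliefs. Since our private signal process $(x_{it})_{i,t}$ satisfies all the hypotheses of their theorem irrespective of the strategy $\sigma$ (the distribution of the $x_{it}$'s does not depend on actions), the conclusion $\mathbb{P}[C_t^q(\theta)\mid\theta]>q$ for all sufficiently large $t$ follows.

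The main subtlety is ensuring that the action history, which is endogenous and depends on both players' past private signals through $\sigma$, does not break the CEMS argument. The clean resolution is that their iterative washing-out of correlation in higher-order beliefs only uses conditional independence of the freshly arriving private signals given state and public history, and this independence is preserved in our model because $x_{1t}$ and $x_{2t}$ are drawn independently of each other and of all past actions. Hence the CEMS proof goes through verbatim once we treat $(a^{t-1})$ as part of the common information entering the belief operator.
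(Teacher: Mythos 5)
There is a genuine gap: your argument rests on invoking the \cite{cems1} common learning theorem as a black box, but the hypotheses of that theorem are not satisfied by the observation process for which the proposition's belief operators $B^q_{it}$ are actually defined. In the proposition, player $i$'s period-$t$ beliefs are computed from $(a^{t-1}, x_i^t)$, and the action component is endogenous: it is a strategy-dependent function of \emph{both} players' past private signals. The CEMS theorem applies to an exogenous signal process that is i.i.d.\ over time conditional on the state (with a finite signal space); it has no actions at all. Here the full observation process $(x_{it}, a_{t-1})_t$ is neither exogenous nor conditionally i.i.d.\ over time, and conditioning on the public action history destroys the i.i.d.\ structure of the remaining private signals (a player's own past signals are correlated with his own past actions). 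Moreover, you misstate the cited theorem's hypotheses: CEMS do not assume conditional independence across players (their Theorem 1 needs only finite signal spaces and i.i.d.-over-time signals), so the checklist you verify is not the one their result requires. Your closing claim that ``the CEMS proof goes through verbatim once we treat $(a^{t-1})$ as part of the common information'' is precisely the nontrivial assertion that needs proof; common learning is not monotone under refinements of information in any off-the-shelf way, and an argument is required to show that the endogenous public component helps rather than hurts iterated beliefs.

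The paper does not cite the CEMS theorem; it gives a self-contained argument via the $q$-evident-event characterization of \cite{monderersamet}, and that argument is exactly the adaptation your proposal leaves implicit. Concretely, in Step 1 the paper fixes an on-path action history $\bar a^{t-1}$ with $\mathbb{E}[p_{it} \mid \theta = H, \bar a^{t-1}] \ge 1 - \varepsilon^2$ for both players, defines the event $F_t$ on which the action history equals $\bar a^{t-1}$ and both private beliefs exceed $1-\varepsilon$, and shows $F_t$ is $(1-\varepsilon)^2$-evident and entails $(1-\varepsilon)^2$-belief in $H$, hence $F_t \subseteq C_t^{(1-\varepsilon)^2}(H)$. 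The step that makes this work is the one observation your proposal correctly identifies but does not exploit: conditional on $(\theta, \bar a^{t-1})$, the players' beliefs $p_{1t}, p_{2t}$ are independent, which yields the product bound $\mathbb{P}[F_t \mid \theta = H, \bar a^{t-1}] \ge (1-\varepsilon)^2$. Step 2 then uses individual learning, the tower property over action histories, and a Markov-inequality lemma to show such histories arise with probability close to one for large $t$. To repair your proof you would either have to carry out this conditional-on-public-history construction yourself (at which point the citation to CEMS is doing no work), or prove a general lemma that common learning from an exogenous private signal process is preserved when both players additionally observe an arbitrary endogenous public process---a statement that is plausible but is not in \cite{cems1} and cannot be dismissed as following ``verbatim.''
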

% Note that the event $C^{q}_{t}(F)$ is measurable with respect to the total information  available at time $t$ i.e. the private signals $(x_i^t,x_j^t)$ and the action history $a^{t-1}$. We next show that in the equilibria in $\cS(T,\varepsilon)$ that we construct in Proposition \ref{prop:fixedpoint}, if $\varepsilon$ is small, when both players invest with probability one, it must be the case that there is common $(1-\varepsilon)^2$-belief that the underlying state is $H$. 
% \begin{proposition}\label{prop:commonqbelief}
% Let $\sigma^* \in \cS(T,\varepsilon)$ be an equilibrium strategy profile satisfying the conditions of Proposition \ref{prop:fixedpoint} and let $a^{t-1} \in \mathcal{I}$ be an action history in an investment phase. If both players invest at $a^{t-1}$ based on their private signals, and their private beliefs are above $1-\varepsilon$ then both players have common $(1-\varepsilon)^2$-belief that the underlying state is $H$. 
% \end{proposition}

%==================================================
\section{Conclusion}
\label{section: conclusion}
%==================================================

This paper has attempted to answer the question of whether rational
Bayesian players in a coordination game, each of whom is receiving
private signals about the unknown state of nature, can both learn to
play the action appropriate to the true state and do so in a manner
such that the loss in payoffs becomes negligible as the discount
factor goes to one. The answer, in a simple setup, is affirmative,
even though the particular coordination game we consider does not have
properties such as statewise full dimensionality. Future research can
be considered along several dimensions. The obvious extension, which
is most likely straightforward, would be to have an arbitrary finite
number of players and states of nature. A more interesting question is
whether our results still hold whenever the private signal structure
is such that players always commonly learn the state. Costly
information acquisition is another possible extension. Finally,
another interesting follow-up to this work would be to see if a
similar set of results can be obtained for boundedly rational players
who do not use full Bayesian updating but are constrained by memory
restrictions.

%==================================================
\appendix
\section{Appendix}
\subsection{Strategies}
\begin{claim} (Equivalence of conditions) 
Let $\sigma_i \in \mathcal{S}(T,\varepsilon)$ and let $\sigma_j$ be any strategy of player $j$. Now, suppose $a^{t-1}$ is reached with positive probability under $\sigma=(\sigma_i,\sigma_j)$. Then, 
\begin{align}
    \mathbb{P}_{\sigma}\left[a_{it}=I \mid \theta = H, a^{t-1}\right] =\sigma_i(a^{t-1})(I|H).
\label{claim:equiv}
\end{align}
Hence, the conditional probability $\mathbb{P}_{\sigma}\left[a_{it}=I \mid \theta = H, a^{t-1}\right]$ is independent of $\sigma_j$. 
\label{claim:equivcondition}
\end{claim}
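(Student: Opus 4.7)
The plan is to obtain the identity by a direct factorization of the joint probability $\mathbb{P}_{\sigma}[a_{it}=I, a^{t-1} \mid \theta = H]$ into an ``$i$-part'' and a ``$j$-part'', and to show that the $j$-part appears also in the factorization of $\mathbb{P}_{\sigma}[a^{t-1} \mid \theta=H]$, so that it cancels when the conditional probability is formed. The key input is conditional independence of the signal processes $(x_{i\tau})_{\tau}$ and $(x_{j\tau})_{\tau}$ given $\theta$, together with the fact that each player's period-$\tau$ mixing depends only on that player's own signal history and on the public action history $a^{\tau-1}$.

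Since $a^{t-1}$ is reached with positive probability, one may write
$$\mathbb{P}_{\sigma}[a_{it}=I \mid \theta = H, a^{t-1}] = \frac{\mathbb{P}_{\sigma}[a_{it}=I, a^{t-1} \mid \theta=H]}{\mathbb{P}_{\sigma}[a^{t-1} \mid \theta=H]}$$
and expand each probability by summing over the private signal sequences of both players. Conditional on $\theta=H$, the joint distribution of signals factors as the product $f_H(x_i^{t-1}) f_H(x_j^{t-1})$, and the probability that each player plays the prescribed actions up to time $t-1$ also factors, since their mixing decisions are independent given the public action history. Hence
$$\mathbb{P}_{\sigma}[a^{t-1} \mid \theta=H] = A_i(a^{t-1}) \cdot A_j(a^{t-1}),$$
where $A_k(a^{t-1}) = \sum_{x_k^{t-1}} f_H(x_k^{t-1}) \prod_{\tau=1}^{t-1} \sigma_k(x_k^\tau, a^{\tau-1})(a_{k\tau})$ for $k \in \{i,j\}$. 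An analogous expansion of the numerator yields $B_i(a^{t-1}) \cdot A_j(a^{t-1})$, where $B_i(a^{t-1})$ carries an extra factor $\sigma_i(x_i^t, a^{t-1})(I)$ and an extra summation over the period-$t$ signal $x_{it}$. Positivity of $\mathbb{P}_{\sigma}[a^{t-1} \mid \theta=H]$ forces $A_j(a^{t-1}) > 0$, so it cancels in the ratio.

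What remains to check is that $B_i(a^{t-1})/A_i(a^{t-1})$ equals the expression defining $\sigma_i(a^{t-1})(I \mid H)$ in the excerpt. The numerator is $B_i(a^{t-1})$ by construction; the denominator appearing in the excerpt is written as a sum over $x_i^t \in X^t$ but contains no factor depending on the period-$t$ coordinate $x_{it}$, so summing $f_H(x_{it})$ over $x_{it}$ yields $1$ and the expression reduces to $A_i(a^{t-1})$. This completes the identification, and manifestly shows that the conditional probability does not involve $\sigma_j$. The only conceptually non-routine step is the factorization, which is essentially Kuhn's equivalence between mixed and behavioral strategies---as alluded to in the paragraph preceding the claim---applied to the conditionally independent signal structure; everything else is algebraic bookkeeping.
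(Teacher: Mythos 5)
Your proof is correct and follows essentially the same route as the paper's: expand numerator and denominator of $\mathbb{P}_{\sigma}[a_{it}=I \mid \theta=H, a^{t-1}]$ as sums over both players' signal sequences, use conditional independence to factor each into an $i$-part and a $j$-part, note the $j$-part is strictly positive since $a^{t-1}$ is reached with positive probability, and cancel it to recover the definition of $\sigma_i(a^{t-1})(I \mid H)$. Your additional remark that the period-$t$ coordinate sums out of the denominator (so the sum over $X^t$ reduces to one over $X^{t-1}$) is a harmless normalization point that the paper's version of the identity absorbs by summing over $X^t$ throughout.
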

\begin{proof}
Firstly note that since $a^{t-1}$ is reached with positive probability under $\sigma$, it is the case that $a^{t-1}$ is consistent with $\sigma_i$. Hence, the conditional probability in \ref{claim:equiv} is well-defined.
\begin{align*}
\mathbb{P}_{\sigma}\left[a_{it}=I \mid \theta = H, a^{t-1}\right] = \frac{\mathbb{P}_{\sigma}\left[a_{it}=I, \theta = H, a^{t-1}\right]}{\mathbb{P}_{\sigma}\left[\theta = H, a^{t-1}\right]},
\end{align*}
The numerator is equal to 
\begin{align*}
 \displaystyle\sum_{x^t_i \in X^t} \displaystyle\sum_{x^t_j \in X^t} f^H(x^t_i) f^H(x^t_j) \Big[\prod_{\tau=1}^{t-1} \sigma_i(x_i^\tau,a^{\tau-1})(a_{i\tau})\big(\sigma_i(x^t_i,a^{t-1})(I)\big)\Big]\Big[\prod_{\tau=1}^{t-1} \sigma_j(x_j^\tau,a^{\tau-1})(a_{j\tau})\Big] \\  
 = 
 \displaystyle\Big[\sum_{x^t_i \in X^t} f^H(x^t_i)\prod_{\tau=1}^{t-1} \sigma_i(x_i^\tau,a^{\tau-1})(a_{i\tau})\big(\sigma_i(x^t_i,a^{t-1})(I)\big)\Big]\Big[\sum_{x^t_j \in X^t}  f^H(x^t_j)\prod_{\tau=1}^{t-1} \sigma_j(x_j^\tau,a^{\tau-1})(a_{j\tau})\Big].
\end{align*}
The equality follows from the fact that signals are conditionally independent across players, which means that the probability that players receive signals $(x^t_i,x^t_j)$ is equal to $f^H(x^t_i) f^H(x^t_j)$. By the same token, the denominator is equal to
\begin{align*}
\displaystyle\sum_{x^t_i \in X^t} \displaystyle\sum_{x^t_j \in X^t} f^H(x^t_i) f^H(x^t_j)\Big[\prod_{\tau=1}^{t-1} \sigma_i(x_i^\tau,a^{\tau-1})(a_{i\tau})\Big]\Big[\prod_{\tau=1}^{t-1} \sigma_j(x_j^\tau,a^{\tau-1})(a_{j\tau})\Big]. 
 \\  
 = \displaystyle \Big[\sum_{x^t_i \in X^t}f^H(x^t_i) \prod_{\tau=1}^{t-1} \sigma_i(x_i^\tau,a^{\tau-1})(a_{i\tau})\Big]\Big[\sum_{x^t_j \in X^t} f^H(x^t_j)\prod_{\tau=1}^{t-1} \sigma_j(x_j^\tau,a^{\tau-1})(a_{j\tau})\Big].
\end{align*}
Now, notice that since the denominator is strictly positive, it follows that the term 
$$\sum_{x^t_j \in X^t} f^H(x^t_j)\prod_{\tau=1}^{t-1} \sigma_j(x_j^\tau,a^{\tau-1})(a_{j\tau})$$ 
in the denominator is also strictly positive. This terms also cancels out from the numerator and the denominator and we obtain 
  \begin{align*}
    \mathbb{P}_{\sigma}\left[a_{it}=I \mid \theta = H, a^{t-1}\right] = \frac{\displaystyle\sum_{x^t_i \in X^t} f^H(x^t_i) \Big[\prod_{\tau=1}^{t-1} \sigma_i(x_i^\tau,a^{\tau-1})(a_{i\tau})\big(\sigma_i(x^t_i,a^{t-1})(I)\big)\Big]}{\displaystyle\sum_{x^t_i \in X^t} f^H(x^t_i) \Big[\prod_{\tau=1}^{t-1} \sigma_i(x_i^\tau,a^{\tau-1})(a_{i\tau})\Big]}\cdot
\end{align*}
Hence, $\mathbb{P}_{\sigma}\left[a_{it}=I \mid \theta = H, a^{t-1}\right] =\sigma_i(a^{t-1})(I|H).$
\end{proof}

\subsection{Two Lemmas}
The following lemmas shall be useful in our analysis. The first lemma states that beliefs in an event---which are a martingale---form a submartingale when conditioned on the same event.
\begin{lemma} \label{lem:condexpineq}
Let $\mathcal{F}_1 \subseteq \mathcal{F}_2$ be sigma-algebras in a finite probability space, let $p=\mathbb{P}[E\mid \mathcal{F}_1]$ be the prior probability of some event $E$, and let $q=\mathbb{P}[E\mid \mathcal{F}_2]$ be its posterior probability. Then, 
\begin{align*}
    \mathbb{E}[q\mid E, p] = \frac{\mathbb{E}[q^2 \mid p]}{\mathbb{E}[q \mid p]} \geq p.
    \end{align*}
\end{lemma}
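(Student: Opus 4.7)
The plan is to prove the equality and the inequality separately, by a short Bayes' rule plus tower-property computation, followed by Jensen's inequality. Throughout I will use that $p$ is $\mathcal{F}_{1}$-measurable, $q$ is $\mathcal{F}_{2}$-measurable, and that $\mathbb{E}[\mathbf{1}_{E}\mid\mathcal{F}_{k}]$ equals $p$ for $k=1$ and $q$ for $k=2$.

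First I would record the elementary identity $\mathbb{E}[q\mid p]=p$. This follows by the tower property applied to $\mathcal{F}_{1}\subseteq\mathcal{F}_{2}$: $\mathbb{E}[q\mid\mathcal{F}_{1}]=\mathbb{E}[\mathbb{E}[\mathbf{1}_{E}\mid\mathcal{F}_{2}]\mid\mathcal{F}_{1}]=\mathbb{E}[\mathbf{1}_{E}\mid\mathcal{F}_{1}]=p$, and since $p$ is $\mathcal{F}_{1}$-measurable, conditioning further down to $\sigma(p)$ leaves the value unchanged. By the same reasoning, $\mathbb{P}[E\mid p]=\mathbb{E}[\mathbf{1}_{E}\mid p]=p$.

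Next I would compute $\mathbb{E}[q\mid E,p]$ via Bayes' rule:
\begin{align*}
\mathbb{E}[q\mid E,p] \;=\; \frac{\mathbb{E}[q\,\mathbf{1}_{E}\mid p]}{\mathbb{P}[E\mid p]} \;=\; \frac{\mathbb{E}[q\,\mathbf{1}_{E}\mid p]}{p}.
\end{align*}
For the numerator, I use the tower property again, pulling $q$ out since it is $\mathcal{F}_{2}$-measurable:
\begin{align*}
\mathbb{E}[q\,\mathbf{1}_{E}\mid p] \;=\; \mathbb{E}\bigl[\mathbb{E}[q\,\mathbf{1}_{E}\mid\mathcal{F}_{2}]\bigm|p\bigr] \;=\; \mathbb{E}\bigl[q\cdot\mathbb{E}[\mathbf{1}_{E}\mid\mathcal{F}_{2}]\bigm|p\bigr] \;=\; \mathbb{E}[q^{2}\mid p].
\end{align*}
Combining these and substituting $p=\mathbb{E}[q\mid p]$ in the denominator yields the claimed equality $\mathbb{E}[q\mid E,p]=\mathbb{E}[q^{2}\mid p]/\mathbb{E}[q\mid p]$.

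For the inequality $\mathbb{E}[q^{2}\mid p]/\mathbb{E}[q\mid p]\ge p$, I would invoke the conditional Jensen inequality (equivalently, non-negativity of conditional variance): $\mathbb{E}[q^{2}\mid p]\ge(\mathbb{E}[q\mid p])^{2}=p^{2}$, so dividing by $\mathbb{E}[q\mid p]=p>0$ gives the desired bound. There is no serious obstacle here; the only thing to be careful about is that conditioning on $p$ means conditioning on the (coarser) $\sigma$-algebra $\sigma(p)\subseteq\mathcal{F}_{1}$, and the passage from $\mathcal{F}_{1}$ to $\sigma(p)$ in the identity $\mathbb{E}[q\mid p]=p$ requires that $p$ be $\sigma(p)$-measurable, which is automatic. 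The finiteness of the probability space ensures no measure-theoretic subtleties arise with the conditional expectations.
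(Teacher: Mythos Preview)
Your proof is correct and follows essentially the same approach as the paper: both arguments establish $\mathbb{E}[q\mid p]=p$ by the tower property, compute $\mathbb{E}[q\mid E,p]$ via Bayes' rule using that $\mathbb{E}[\mathbf{1}_E\mid\mathcal{F}_2]=q$, and obtain the inequality from non-negativity of conditional variance. The only cosmetic difference is that the paper writes out the Bayes step as an explicit sum over the (finitely many) values of $q$, whereas you use the abstract identity $\mathbb{E}[q\mid E,p]=\mathbb{E}[q\,\mathbf{1}_E\mid p]/\mathbb{P}[E\mid p]$ and the tower property directly; your version is slightly cleaner but the underlying argument is identical.
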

\begin{proof}
Since the space is finite, we can write
\begin{align*}
    \mathbb{E}[q\mid E,p]  =\sum_{x \in [0,1]}x \cdot \mathbb{P}[q=x\mid E,p].
\end{align*}
By Bayes' law,
\begin{align*}
    =\sum_{x \in [0,1]}x \cdot \mathbb{P}[E\mid q=x,p] \cdot \frac{\mathbb{P}[q=x\mid p]}{\mathbb{P}[E\mid p]}.
\end{align*}
Note that $\mathbb{P}[E\mid p]=p$, and that since $\mathcal{F}_1 \subseteq \mathcal{F}_2$, $\mathbb{P}[E\mid q=x,p]=x$. Hence
\begin{align*}
    =\sum_{x \in [0,1]}x^2 \cdot \frac{\mathbb{P}[q=x\mid p]}{p} = \frac{1}{p}\mathbb{E}[q^2\mid p].
\end{align*}
Again using the fact that $\mathcal{F}_1 \subseteq \mathcal{F}_2$, $p = \mathbb{E}[q\mid p]$, we have shown that
\begin{align*}
    =\frac{\mathbb{E}[q^2 \mid p]}{\mathbb{E}[q \mid p]}\cdot
\end{align*}
Finally, this is at least $p$, since $\mathbb{E}[(q-p)^2\mid p] \geq 0$, and since $\mathbb{E}[q\mid p]=p$, and so $\mathbb{E}[q^2\mid p]\geq p^2$.
\end{proof}
The next lemma states that if a random variable taking values in the unit interval has a mean close to one, then it must assume values close to one with high probability. 
\begin{lemma}\label{lem:markovineq}
Let $X$ be a random variable which takes values in $[0,1]$. Suppose $\mathbb{E}[X] \geq 1 - \varepsilon^2$ where $\varepsilon\in (0,1)$. Then, 
\begin{align}
\mathbb{P}[X > 1- \varepsilon] \geq 1-\varepsilon. \nonumber 
\end{align}
\begin{proof}
Define the random variable $Y = 1-X$. Note that it suffices to show that
\begin{align}
\mathbb{P}[Y \geq \varepsilon] \leq \varepsilon. \label{markov}
\end{align}
Since $\mathbb{E}[Y] \leq \varepsilon^2$, and since $Y$ is non-negative, by Markov's inequality, we obtain (\ref{markov}).
\end{proof}
\end{lemma}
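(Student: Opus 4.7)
The plan is to pass to the complementary variable $Y = 1 - X$, which is non-negative and bounded above by $1$, and then invoke Markov's inequality once. Concretely, from $X \in [0,1]$ and $\mathbb{E}[X] \ge 1 - \varepsilon^2$ one immediately gets $Y \ge 0$ and $\mathbb{E}[Y] \le \varepsilon^2$. The event $\{X > 1 - \varepsilon\}$ is the same as $\{Y < \varepsilon\}$, so proving the desired probability lower bound reduces to showing $\mathbb{P}[Y \ge \varepsilon] \le \varepsilon$.

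For that bound, I would apply Markov's inequality to the non-negative random variable $Y$ at the threshold $\varepsilon$, which yields
\begin{align*}
\mathbb{P}[Y \ge \varepsilon] \le \frac{\mathbb{E}[Y]}{\varepsilon} \le \frac{\varepsilon^2}{\varepsilon} = \varepsilon.
\end{align*}
Taking complements then gives $\mathbb{P}[X > 1 - \varepsilon] = 1 - \mathbb{P}[Y \ge \varepsilon] \ge 1 - \varepsilon$, which is exactly the claimed inequality.

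There is essentially no obstacle: the lemma is a one-line Markov-inequality argument dressed up by a change of variable. The only thing worth being a little careful about is the direction of the strict/weak inequality between $\{X > 1 - \varepsilon\}$ and $\{Y < \varepsilon\}$ versus $\{Y \ge \varepsilon\}$, but since these are exact complements the conversion is immediate and loses nothing. No hypothesis beyond $X \in [0,1]$, $\mathbb{E}[X] \ge 1 - \varepsilon^2$, and $\varepsilon \in (0,1)$ is needed; in particular the boundedness from above is used only to ensure $Y \ge 0$, and the assumption $\varepsilon \in (0,1)$ is used only so that dividing by $\varepsilon$ is legitimate and the resulting bound $1-\varepsilon$ is a non-trivial probability.
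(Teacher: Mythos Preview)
Your proof is correct and follows exactly the same approach as the paper: pass to $Y = 1 - X$, reduce the claim to $\mathbb{P}[Y \ge \varepsilon] \le \varepsilon$, and apply Markov's inequality using $\mathbb{E}[Y] \le \varepsilon^2$. There is nothing to add.
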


%==================================================
\subsection{Proofs from Section \ref{section: equilibrium analysis}}
%==================================================
\begin{proofof}{Claim \ref{clm:cool-offs-closed}}
The conclusion follows by definition. If $\{\sigma^n_i\}_n \subseteq \sigma^n_i$ and $\sigma^n_i \rightarrow \sigma^*_i$ in the weak-* topology. Since the support of each $\sigma^n_i$ is contained in $S_i(T)$, so is the support of $\sigma^*_i$. It also follows that any  convex combination $\sigma_i^\lambda=  \lambda \sigma_i + (1-\lambda) \sigma_i'$ of two mixed strategies $\sigma_i,\sigma_i' \in \Sigma_i(T)$ also belongs to $\Sigma_i(T)$. 
\end{proofof}

\begin{proofof}{Claim \ref{clm:legal-closed-convex}}
We first argue that $\Sigma_i(T,\varepsilon)$ is closed. Suppose $\{\sigma_i^n\} \subseteq \Sigma_i(T,\varepsilon)$ and suppose $\sigma^n_i \rightarrow \sigma_i^*$ in the weak-* topology. Firstly, note that from Claim \ref{clm:cool-offs-closed}, it follows that $\sigma^*_i \in \Sigma_i(T)$. Now suppose $a^{t-1}$ in an investment phase and is consistent with $\sigma^*_i$. Then, the denominator of $\sigma^*_i(a^{t-1})(I|H)$, as defined, will be strictly positive. This means that for large $n$, the denominator of $\sigma^n_i(a^{t-1})(I|H)$ will be strictly positive also. This means that $a^{t-1}$ is consistent with $\sigma^n_i$ for large $n$. It follows from the definition that $\sigma^n_i(a^{t-1})(I|H) \rightarrow \sigma^*_i(a^{t-1})(I|H)$ which implies that $\sigma^*_i(a^{t-1})(I|H) \geq 1-\varepsilon$.
\\ \\ 
We now show convexity. Let $\sigma_i,\sigma_i' \in \Sigma_i(T,\varepsilon)$ be two mixed strategies let $\lambda \in (0,1)$. Now, consider the mixed strategy defined as  $\sigma_i^\lambda=  \lambda \sigma_i + (1-\lambda) \sigma_i'$. Then, we show $\sigma_i^\lambda \in \Sigma_i(T,\varepsilon)$.  \\ \\ 
Clearly $\sigma^{\lambda}_i \in \Sigma_i(T)$ since $\Sigma_i(T)$ is convex from Claim \ref{clm:cool-offs-closed}. Let $\bar{\sigma_j}$ be the strategy of player $j$ that plays each action with probability $1/2$ at each private history. Denote as $\mathbb{P}^\lambda$, $\mathbb{P}$ and $\mathbb{P}'$, the probability measures induced by the strategy profiles $(\sigma_i^\lambda,\sigma_j), (\sigma_i,\sigma_j) $ and $(\sigma_i',\sigma_j)$. Note that $\mathbb{P}^\lambda = \lambda \mathbb{P} + (1-\lambda) \mathbb{P}'$. \\ \\ 
We wish to show that $\sigma_i^\lambda$ also satisfies condition \ref{eq:coolstrat}. To this end, let $a^{t-1}$ be an action history in an investment phase that is consistent with $\sigma_i^\lambda$. It follows that $a^{t-1}$ is reached with positive probability under $\sigma_i^\lambda$. Then, from Claim \ref{claim:equiv}, it suffices to show that 
\begin{align*}
    \mathbb{P}^\lambda\left[a_{it}=I \mid \theta = H, a^{t-1}\right] \geq 1-\varepsilon.
\end{align*}
Now, note by hypothesis, since $\sigma_i,\sigma_i'\in \Sigma_i(T,\varepsilon)$,  we already have that $\mathbb{P}\left[a_{it}=I, \theta = H, a^{t-1}\right] \geq (1-\varepsilon)\mathbb{P}\left[\theta = H, a^{t-1}\right]$ and $\mathbb{P}'\left[a_{it}=I, \theta = H, a^{t-1}\right] \geq (1-\varepsilon)\mathbb{P}'\left[\theta = H, a^{t-1}\right]$. These, together imply that 
\begin{align*}
\displaystyle\frac{\lambda \mathbb{P}\left[a_{it}=I, \theta = H, a^{t-1}\right] + (1-\lambda) \mathbb{P}'\left[a_{it}=I, \theta = H, a^{t-1}\right]}{\lambda \mathbb{P}\left[\theta = H, a^{t-1}\right] + (1-\lambda) \mathbb{P}'\left[\theta = H, a^{t-1}\right]} \geq 1-\varepsilon,
\end{align*}
which is equivalent to 
\begin{align*}
   \displaystyle \mathbb{P}^\lambda\left[a_{it}=I \mid \theta = H, a^{t-1}\right] = \frac{\mathbb{P}^\lambda\left[a_{it}=I,\theta = H, a^{t-1}\right]}{\mathbb{P}^\lambda\left[\theta = H, a^{t-1}\right]}\geq 1-\varepsilon.
\end{align*}
\end{proofof}

\begin{proofof}{Lemma \ref{lem:pure}}
Denote by $p$ player $i$'s belief at the action history $a^{t-1}$. For each possible continuation strategy $s$ of player $i$ (which we think of as depending only on his future signals) let $v_H$ and $v_L$ denote the expected payoff in the high and low state, respectively, so that the expected payoff for the continuation strategy $s$ is given by $p v_H+(1-p) v_L$. In any best response of player $i$, his expected payoff is given by
\begin{align*}
    v(p)
=   \max_{s} \bigl\{p v_H+(1-p) v_L\bigr\},
\end{align*}
where the maximization is over all continuation strategies for player $i$.
Since it is the maximum of convex functions, it is convex. To see that it is increasing, note that continuation value $v_L \leq 0$ for every continuation strategy $s$, that $v_L = v_H = 0$ for some continuation strategy $s$---specifically in the strategy in which the player never invests. Hence the derivative of continuation payoff $v(p)$ is at least $0$ at $p=0$, and thus continuation payoff $v(p)$, as a function of belief $p$, has a non-negative subderivative in the entire interval $[0,1]$.
\end{proofof}

\begin{figure}[t]
\centering
\begin{tikzpicture}[domain=0:6, thick]
\draw[black,fill=black] (0,0) circle (2pt) node[above left] {$O$};
\draw[very thick,->] (-1.5,0)--(7.5,0) node[below] {$p$}; % x-axis
\draw[very thick,->] (0,-1.5)--(0,5.5) node[left] {}; % y-axis

\draw[blue,very thick,domain=0:5] plot(\x,0.25*\x^1.5) node[above] {};
\draw[blue,very thick,domain=5:6] plot(\x,0.2*\x^2-2.204) node[above] {};
\draw[black,fill=black] (6,0.2*6^2-2.204) circle (2pt) node[right] {$v(1)$};

\draw[thick, domain=0:6] plot(\x,0.530*\x-0.353);
\draw[black,fill=black] (2,0.530*2-0.353) circle (2pt) node[] {};
\draw[thick, dashed] (2,0) node[below]{$p_I$}--(2,0.530*2-0.353) node[] {};
\draw[black,fill=black] (0,0.530*0-0.353) circle (2pt) node[left] {$v_{L}^I$};
\draw[black,fill=black] (6,0.530*6-0.353) circle (2pt) node[right] {$v_{H}^I$};

\draw[thick, domain=1.5:6] plot(\x,1.2*\x-3.204);
\draw[black,fill=black] (5,1.2*5-3.204) circle (2pt) node[] {};
\draw[thick, dashed] (5,0) node[below]{$p_N$}--(5,1.2*5-3.204) node[] {};
\draw[black,fill=black] (6,1.2*6-3.204) circle (2pt) node[right] {$v_{H}^N$};

\draw[thick, dashed] (6,0) node[below]{$1$}--(6,5.5) node[] {};

\end{tikzpicture}
\caption{the continuation value $v(p)$}
\label{fig:continuationvalue}
\end{figure}
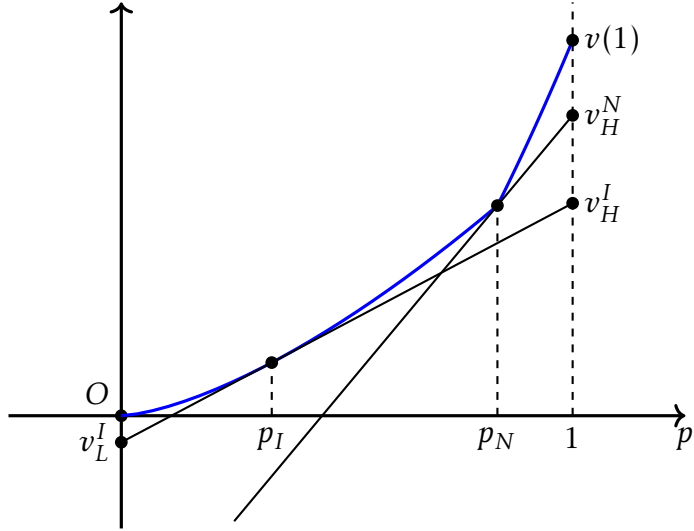

\begin{proofof}{Lemma \ref{lem:threshold}}
Denote by $p$ player $i$'s belief at the action history $a^{t-1}$. Let $s^*_I$ be the strategy in which player $i$ invests at $a^{t-1}$ and then never again. The conditional payoffs for this strategy are
\begin{align*}
    v^*_H \ge (1-\delta)(1-c -\varepsilon), \quad
    v^*_L = -(1-\delta)c.
\end{align*}
where the former holds true because the opponent invests with probability at least $1-\varepsilon$ in the state $\theta = H$.

Let $s_N$ be any strategy in which player $i$ does not invest at $a^{t-1}$. Then, because of the cool-off scheme, the payoffs for this strategy are
\begin{align}\label{eq:s-bound1}
    v^N_H \le \delta^{T_0}.
\end{align}
Let $s_I$ be any strategy in which player $i$ invests at $a^{t-1}$. Then
\begin{align}\label{eq:s-bound2}
    v^I_L \leq -(1-\delta)c.
\end{align}

Assume by contradiction that it is optimal to choose the strategy $s_I$ at belief $p_I$ and to choose the strategy $s_N$ at belief $p_N > p_I$. Recall that the continuation payoff $v(p)$, defined in the proof of Lemma \ref{lem:pure}, is supported by the line $p v^I_L+(1-p) v^I_H$ at the belief $p_I$ and by $p v^N_L+(1-p) v^N_H$ at $p_N$. Since it is non-decreasing and convex (Lemma \ref{lem:pure}), it follows that $v^I_H \le v^N_H$. Inequality \eqref{eq:s-bound1} implies that $v^I_H \le \delta^{T_0}$. By condition \eqref{eq:threshold}, we have that $v^I_H < v^*_H$. By inequality \eqref{eq:s-bound2} we have that $v^I_L \le v^*_L$, and thus the continuation strategy $s^*_I$ strictly dominates $s_I$ for any belief $p \neq 0$. Hence, the continuation strategy $s_I$ cannot be optimal in such a belief. Moreover, it cannot be optimal at the belief $p = 0$ either, since it is better to not invest. We have thus reached a contradiction.
\end{proofof}

\begin{proofof}{Corollary \ref{cor:monotone}}
It suffices to establish the corollary in the case where $s = t+1$. That is, $a^{s-1} = a^{t} = (a^{t-1}, (I,I))$. Now, we have, 
\begin{align*}
\mathbb{E}\bigl[p_{it+1} \mid \theta=H, a^{t-1}, a_{it}=I,a_{jt} = I\bigr] 
 &
\ge   \mathbb{E}\bigl[p_{it+1} \mid \theta=H, a^{t-1},a_{jt} = I\bigr]  
    \notag
\\  &
\ge   \mathbb{E}\bigl[p_{it+1} \mid \theta=H, a^{t-1}\bigr], 
    \notag
\end{align*}
where the first inequality follows from the fact that player $i$ uses a threshold rule at $a^{t-1}$ (Lemma \ref{lem:threshold}), hence conditioning on $a_{it}=I$ increases the expectation of players $i$'s private belief. The second inequality follows since player $j$ uses a threshold rule, hence conditional on  $a_{jt}= I$, player $i$ learns that player $j$'s belief is above a threshold and hence, his beliefs move upwards. Now, by the tower property of expectations, 
\begin{align*}
\mathbb{E}\bigl[p_{it+1} \mid \theta=H, a^{t-1}\bigr] =  \mathbb{E}\bigl[\mathbb{E}\bigl[p_{it+1} \mid \theta=H, a^{t-1}, p_{it}\bigr] \mid \theta = H, a^{t-1}\bigr]
\end{align*}
From Lemma \ref{lem:markovineq}, and since $p_{it}=\mathbb{P}[\theta=H \mid a^{t-1},x_i^t]$, it follows that $\mathbb{E}[p_{it+1} \mid \theta=H, a^{t-1}, p_{it}]\geq p_{it}$ almost surely, and so
\begin{align*}
\mathbb{E}\bigl[p_{it+1} \mid \theta=H, a^{t-1}\bigr] \geq \mathbb{E}\bigl[p_{it} \mid \theta = H, a^{t-1}\bigr].
\end{align*}
\end{proofof}

\begin{proofof}{Proposition \ref{prop:fixedpoint}}
  The strategy space $\Sigma_{i}$ is compact. By Claim
  \ref{clm:legal-closed-convex}, the set $\Sigma_{i}(T,\varepsilon)$ is
  closed, thus compact, and convex. Payoffs are continuous, which
  follows from the fact that they are discounted exponentially and
  that the topology on $\Sigma_i$ is the weak-* topology. Hence, by
  Glicksberg's Theorem, the game in which the players are restricted
  to $\cS(T,\varepsilon)$ has a Nash equilibrium
  $\sigma^{\ast} = (\sigma_{i}^{\ast},\sigma_{j}^{\ast}) \in
  \cS(T,\varepsilon)$.
  It follows from Propositions \ref{prop:legal-best-response} and
  \ref{prop:legal-best-response2} that this fixed point is, in fact, a
  Nash equilibrium of the unrestricted game, since any best response
  to any strategy in $\Sigma_{j}(T,\varepsilon)$ would be in
  $\Sigma_{i}(T,\varepsilon)$.
\end{proofof}

\begin{proofof}{Proposition \ref{prop:commonlearning}}
We prove the claim in the case of the state $\theta=H$.
The proof for the case of the state $\theta=L$ is analogous.
We proceed in two steps.

\paragraph{Step 1:}
Fix a strategy profile $\sigma$, and suppose that there exists an on-path action history $\bar{a}^{t-1} \in A^{t-1}$ such that
\begin{align*}
\mathbb{E}[p_{it} \mid \theta=H, \bar{a}^{t-1}] \ge 1-\varepsilon^{2}
\end{align*}
for each $i = 1,2$. From Lemma \ref{lem:markovineq}, it follows that
\begin{align}
\mathbb{P}[p_{it} > 1-\varepsilon \mid \theta=H, \bar{a}^{t-1}] \ge 1-\varepsilon \label{evidence}
\end{align} 
We will show that 
\begin{align}\label{eq:commonlearning}
	\mathbb{P}\bigl[C_{t}^{(1-\varepsilon)^2}(H) \mid \theta=H, \bar{a}^{t-1}\bigr]
\ge	(1-\varepsilon)^2.
\end{align}
Note that beliefs $p_{1t}$ and $p_{2t}$ are independent conditional on the state $\theta=H$ and action history $\bar{a}^{t-1}$, since private signals are conditionally independent.
Now, consider the event
\begin{align*}
	F_{t} = \bigl\{
		\omega \in \Omega : 
		a^{t-1}(\omega) = \bar{a}^{t-1},~\textup{$p_{it}(\omega) \ge 1-\varepsilon$ for each $i=1,2$}
	\bigr\}.
\end{align*}
Note that from inequality \eqref{evidence}, we have
$\mathbb{P}[F_{t} \mid \theta=H, \bar{a}^{t-1}] \ge
(1-\varepsilon)^{2}$.  Hence, it suffices to show that
$F_{t} \subseteq C_{t}^{(1-\varepsilon)^{2}}(H)$.  By the result of
\cite{monderersamet}, it suffices to show that the event $F_{t}$ is
$(1-\varepsilon)^{2}$-evident (i.e.,
$F_{t} \subseteq B_{t}^{(1-\varepsilon)^{2}}(F_{t})$) and that
$F_{t} \subseteq B_{t}^{(1-\varepsilon)^{2}}(H)$.
%the event $F_{t}$ is $(1-\varepsilon)^{2}$-evident and that $F_{t} \subseteq B_{t}^{(1-\varepsilon)^{2}}(H)$.
The latter follows since $F_{t} \subseteq B_{t}^{1-\varepsilon}(H) \subseteq B_{t}^{(1-\varepsilon)^{2}}(H)$.
Thus, we show the former.
At any $\omega \in F_{t}$, in period $t$ player $i$'s belief about the event $F_{t}$ is
\begin{align*}
	\mathbb{P}\bigl[F_{t} \mid \bar{a}^{t-1}, x_{i}^{t}\bigr]
\ge	p_{it}(\omega)
	\mathbb{P}\bigl[F_{t} \mid \theta=H, \bar{a}^{t-1}, x_{i}^{t}\bigr]
\ge	(1-\varepsilon)^{2}.
\end{align*}
This implies that $\omega \in B_{t}^{(1-\varepsilon)^2}(F_{t})$.
That is, $F_{t} \subseteq B_{t}^{(1-\varepsilon)^2}(F_{t})$.
Hence, inequality \eqref{eq:commonlearning} follows.
Moreover, since $C_{t}^{(1-\varepsilon)^2}(H) \subseteq C_{t}^{(1-\varepsilon)^{2}(1-\varepsilon^{2})}(H)$, we have that
\begin{align*}
	\mathbb{P}\bigl[C_{t}^{(1-\varepsilon)^{2}(1-\varepsilon^{2})}(H) \mid \theta=H, \bar{a}^{t-1}\bigr]
\ge	(1-\varepsilon)^{2}.
\end{align*}

\paragraph{Step 2:}
Fix any $\varepsilon \in (0,1)$, and let $\bar{T} \in \mathbb{N}$ be such that for each $i = 1,2$ and $t \ge \bar{T}$,
\begin{align*}
	\mathbb{E}\bigl[p_{it} \mid \theta=H\bigr]
\ge	1-\left(\frac{\varepsilon^{2}}{2}\right)^{2}.
\end{align*}
Note that there exists such a $\bar{T}$, since each player individually learns the state. By the tower property of conditional expectations, 
\begin{align*}
\mathbb{E}[p_{it} \mid \theta=H] = 	\mathbb{E}\bigl[\mathbb{E}\bigl[p_{it} \mid \theta=H, a^{t-1}\bigr] \mid \theta=H\bigr]
\end{align*}
From Lemma \ref{lem:markovineq}, we have that for each $i = 1,2$,
\begin{align*}
	\mathbb{P}\bigl[\mathbb{E}\bigl[p_{it} \mid \theta=H, a^{t-1}\bigr] \ge 1-\frac{\varepsilon^{2}}{2} \mid \theta = H\bigr]
\ge	1-\frac{\varepsilon^{2}}{2}.
\end{align*}
Then,
\begin{align*}
	\mathbb{P}\bigl[
	\textup{$\mathbb{E}\bigl[p_{it} \mid \theta=H, a^{t-1}\bigr] \ge 1-\varepsilon^{2}$ for each $i=1,2$} \mid \theta = H\bigr]
\ge 1-\varepsilon^{2}.
\end{align*}
Hence, for each $t \ge \bar{T}$,
\begin{align*}
	\mathbb{P}\bigl[C_{t}^{(1-\varepsilon)^{2}(1-\varepsilon^{2})}(H) \mid \theta = H\bigr]
\ge	(1-\varepsilon)^{2}(1-\varepsilon^{2}),
\end{align*}
which establishes the desired result. 
\end{proofof}
% \begin{proofof}{Proposition \ref{prop:commonqbelief}}
% The proof essentially follows from Step 1 of the proof of Proposition \ref{prop:commonlearning}. In order to avoid repetition, we highlight the main arguments. Suppose $\bar{a}^{t-1} \in \mathcal{I}$. Then, by construction, we have 
%  \begin{align*}
%  \mathbb{E}[p_{it} \mid \theta=H, \bar{a}^{t-1}] \ge 1-\varepsilon^{2}
%  \end{align*}
%  for each $i = 1,2$. Suppose at action history $\bar{a}^{t-1}$, both players invest with probability one and suppose their private beliefs are above $1-\varepsilon$. Now, consider any $\omega \in F_t = \bigl\{
%  \omega \in \Omega : 
%  a^{t-1}(\omega) = \bar{a}^{t-1}, p_{it}(\omega) \ge 1-\varepsilon \mbox{ for each } i=1,2
%  \bigr\}$. From Step 1 of Proposition \ref{prop:commonlearning}, we know that $\omega \in C_t^{(1-\varepsilon)^2}(H)$. Hence, there must be common $(1-\varepsilon)^2$-belief in the event $\theta = H$. 
% \end{proofof}

\section{Sequential equilibrium}

A \textit{behavioral strategy} for player $i$ is defined as a map from private histories to probabilities over actions i.e. $\sigma_i : \displaystyle\bigcup_{t \geq 1} X^t \times A^{t-1} \rightarrow \Delta(\{I,NI\})$. For a given strategy profile $\sigma = (\sigma_i,\sigma_j)$, we define \textit{continuation values} as $V^{\theta}_{i}(\sigma \mid h_t)$ for each $h_t = (a^{t-1},x_i^t,x_j^t) = (h_{it},h_{jt})$. This value represents the long-run expected payoff player $i$ would obtain at history $h_t$, if players play according to the strategies $\sigma$ and the state is $\theta$. At period $t$, depending on the private history, player $i$ has beliefs about the state of nature and private signals of his opponent $(\theta,x^t) \in \Theta \times X^t$. Player $i$'s \textit{belief function} is given by $\mu_{i} : \bigcup_{t} H_{it} \rightarrow \Delta(\Theta \times X^t) $. An \textit{assessment} is a pair $\langle \sigma,\mu \rangle$ consisting of a behavioral strategy profile $\sigma = (\sigma_i)_i$ and belief functions for each player $\mu = (\mu_i)_i$. We use sequential equilibrium as our solution concept, which we define in the current setting as follows: 
\begin{definition}\label{def:seqeqm}
An assessment $\langle \sigma^*,\mu^* \rangle$ is a \emph{sequential equilibrium} if: 
\begin{enumerate}
\item
\emph{(Sequential Rationality)}
The strategy profile $\sigma^*$ is sequentially rational for each player given beliefs $\mu^*$. This means, for each $h_{it} = (a^{t-1},x_i^t) \in H_{it}$ and $\sigma_i\in \Sigma_i$,
\begin{align}
    \sum_{(\theta,x_j^t)}
    \mu^*_{i}\bigl(h_{it}\bigr)\bigl(\theta,x_j^t\bigr)
    V^{\theta}_i\bigl(\sigma^* \mid a^{t-1},x_i^t,x_j^t\bigr)
\ge \sum_{(\theta,x_j^t)}
    \mu^*_{i}\bigl(h_{it}\bigr)\bigl(\theta,x_j^t\bigr)
    V^{\theta}_i\bigl(\sigma_i,\sigma^*_{-i} \mid a^{t-1},x_i^t,x_j^t\bigr).
\label{sequentialrationality}
\end{align}
\item
\emph{(Consistency)}
There exists a sequence of profiles of completely mixed strategies (that assigns positive probability to all actions at all private histories) $\{\sigma^n\}_{n \ge 1}$ such that the belief functions $\{\mu^n\}_{n\geq 1}$ uniquely induced by them satisfy  
\begin{align}
\lim_{n \rightarrow \infty}(\mu_{i}^{n}(h_{it}),\sigma_{i}^{n}(h_{it})) = (\mu^{*}_i(h_{it}),\sigma^{*}_i(h_{it}))  
\label{consistency}
\end{align} 
for all $h_{it} \in H_{it}$, and convergence is with respect to the product  topology on $\Delta(\Theta \times X^t) \times A_i$. 
\end{enumerate} 
\end{definition}

The definition of sequential equilibrium is borrowed from
\cite{kreps1982sequential} and is adapted to our setting. While their
definition corresponds to finite extensive-form games, the extension
to the current setting is justified since there are only countably
many information sets for each player and moreover, each information
set is finite.

The following Lemma reduces the problem of finding a sequential
equilibrium to that of finding a Nash equilibrium.
\begin{lemma}
  \label{lemma:sequential}
  For every mixed strategy Nash equilibrium $\sigma^*$ in our game there exists an
  outcome equivalent sequential equilibrium $(\sigma^{**},\mu^*)$.
\end{lemma}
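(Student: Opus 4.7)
The plan is to construct an outcome-equivalent sequential equilibrium $(\sigma^{**},\mu^*)$ via Kuhn's theorem, trembles, and a fixed-point argument.

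First, since each player has perfect recall (they retain their full action and signal history) and every information set is finite, Kuhn's theorem yields an outcome-equivalent behavioral strategy profile $\sigma^b = (\sigma^b_1,\sigma^b_2)$ for the mixed Nash equilibrium $\sigma^*$. Let $\Sigma^{\textup{op}}$ denote the set of behavioral strategy profiles that coincide with $\sigma^b$ at every on-path history of $\sigma^b$; this set is closed, convex, and compact in the product topology, and every profile in $\Sigma^{\textup{op}}$ induces the same outcome distribution as $\sigma^*$, since the induced distribution on $\Omega$ depends only on on-path play.

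Second, I define beliefs via trembles and identify $\sigma^{**}$ by a fixed point. For each $\sigma \in \Sigma^{\textup{op}}$ and each $n \ge 1$, the completely mixed perturbation $\sigma^n[\sigma]_i(h_{it})(a) = (1-1/n)\sigma_i(h_{it})(a) + (1/n)\cdot(1/|A_i|)$ reaches every private history with positive probability, so Bayes' rule defines a belief function $\mu^n[\sigma]$. By Tychonoff's theorem applied to the compact countable product $\prod_{i,h_{it}}\Delta(\Theta \times X^t)$, a subsequential limit $\mu[\sigma] := \lim_k \mu^{n_k}[\sigma]$ exists and agrees with the standard Bayes update at on-path histories. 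I then apply Kakutani's fixed-point theorem to the correspondence $\Phi: \Sigma^{\textup{op}} \rightrightarrows \Sigma^{\textup{op}}$ mapping $\sigma$ to the set of $\sigma' \in \Sigma^{\textup{op}}$ such that, at every private history $h_{it}$, the action $\sigma'_i(h_{it})$ maximizes player $i$'s continuation payoff against $\sigma_j$ given beliefs $\mu[\sigma](h_{it})$. Continuity of continuation payoffs (from geometric discounting) and the Theorem of the Maximum ensure that $\Phi$ has non-empty convex values with closed graph, yielding a fixed point $\sigma^{**}$. Setting $\mu^* := \mu[\sigma^{**}]$, the pair $(\sigma^{**},\mu^*)$ is sequentially rational by construction, consistent via the tremble sequence $\sigma^n[\sigma^{**}] \to \sigma^{**}$ with $\mu^n[\sigma^{**}] \to \mu^*$, and outcome-equivalent to $\sigma^*$ because $\sigma^{**}$ agrees with $\sigma^b$ at every on-path history.

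The main obstacle is verifying the closed-graph property of $\Phi$: although for each fixed $n$ the belief $\mu^n[\sigma]$ is a rational function of $\sigma$ with strictly positive denominator (hence continuous in $\sigma$), the limit $\mu[\sigma]$ can depend discontinuously on the off-path components of $\sigma$, because the leading-order tremble term at a deep off-path history can shift when the off-path strategy changes weights on intermediate actions. This is handled by a diagonal extraction argument over the countably many information sets that selects a common subsequence along which all limiting beliefs vary upper-semicontinuously on $\Sigma^{\textup{op}}$, together with the uniform boundedness of stage payoffs in $[-c, 1-c]$ (so that continuation values are uniformly bounded and continuous in the product topology). These ingredients deliver the upper hemicontinuity of the best-response selection needed for Kakutani's theorem to apply.
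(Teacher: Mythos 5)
Your first step (Kuhn's theorem) matches the paper, but the fixed-point machinery that follows has two genuine gaps, and the first is fatal as stated. Your correspondence $\Phi$ demands that $\sigma'$ simultaneously lie in $\Sigma^{\textup{op}}$ (agree with $\sigma^b$ on path) and maximize the continuation payoff at \emph{every} private history against $\sigma_j$. These two requirements can conflict: the Nash property only guarantees that $\sigma^b_i$ is optimal against $\sigma^b_j$, whereas $\sigma_j \in \Sigma^{\textup{op}}$ has arbitrary off-path behavior. If, say, $\sigma_j$ prescribes investing forever after a deviation by player $i$, then at an on-path history where $\sigma^b$ prescribes $N$, player $i$'s unique maximizer may be to deviate to $I$; then no element of $\Sigma^{\textup{op}}$ is pointwise optimal and $\Phi(\sigma) = \varnothing$, so Kakutani (in any case the Fan--Glicksberg version, in this infinite-dimensional setting) does not apply. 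If you instead weaken $\Phi$ to require optimality only at off-path histories, non-emptiness is restored but the fixed point need not be sequentially rational \emph{on path} (nor even Nash): its off-path play is endogenous, and a fixed point in which deviations trigger mutual investment can make on-path deviations strictly profitable --- outcome equivalence pins down on-path play but does not restore the incentive constraints. Separately, your closed-graph step is asserted rather than proved: the limit beliefs $\mu[\sigma]$ obtained from uniform trembles are genuinely discontinuous in the off-path components of $\sigma$, and a diagonal extraction over information sets handles countably many limits for a \emph{fixed} sequence, not upper hemicontinuity of $\sigma \mapsto \mu[\sigma]$ over all convergent sequences in $\Sigma^{\textup{op}}$; no argument is given that such a selection exists, and in general it does not.

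The paper's proof shows this machinery is unnecessary, precisely because the game has a belief-free off-path anchor. It takes the Kuhn behavioral version $\sigma^*_b$ and replaces all off-path play by $N$ forever. This keeps the profile Nash (and outcome equivalent): the continuation value after any deviation drops to at most $0$, which is exactly what a deviator could guarantee anyway, so deviation payoffs weakly fall. It is sequentially rational off path under \emph{any} beliefs, because against an opponent who never invests, never investing is the unique best response (investing costs $c > 0$ regardless of $\theta$), so no fixed point over off-path behavior is needed --- this is the idea your construction is missing. Consistency is then verified directly with uniform trembles ($1 - \tfrac{1}{n}$ on probability-one actions, $\tfrac{1}{n}$ on probability-zero actions, independent of the private history), under which an observed deviation carries no information about the opponent's signals, supporting the ``ignore the deviation'' belief system in the limit. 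You should either adopt this direct construction or, at minimum, repair non-emptiness of $\Phi$, prove the Nash property of the fixed point, and supply an actual proof of the closed-graph claim.
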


\begin{proof}
  Suppose $\sigma^*$ is a mixed strategy Nash equilibrium. From Kuhn's theorem, let $\sigma_b^*$ be an outcome equivalent behavioral strategy profile. It follows that $\sigma_b^*$ is also Nash Equilibrium of the game. Let $\sigma^{**}$ equal $\sigma_b^*$ for on-path histories, and off-path,  let both
  players play $N$ under $\sigma^{**}$. $\sigma^{**}$ is still a Nash
  equilibrium (and clearly outcome equivalent) since now, off-path,
  players are guaranteed utility at most zero, whereas before
  utilities off-path were non-negative. Also, $\sigma^{**}$ is sequentially
  rational---for any belief system that is consistent with Bayes' rule
  on-path---since, if player $i$ never invests then player $j$'s
  unique best response is to also never invest.

  We construct the belief system $\mu^*$ as follows. Firstly, note
  that beliefs on-path are uniquely determined by Bayes' rule. Now,
  for any observable deviation made by the opponent, an action (one of
  $\{I,N\}$) was taken even though it is supposed to be chosen with
  probability zero. We assume that in $\mu^*$, upon observing such an
  action, players update their belief as if they observed the other
  action, which was supposed to be chosen with probability one.

  To see that $\mu^*$ satisfies consistency, for each $n > 1$, define
  the behavioral strategy $\sigma_i^n$ for each player $i = 1,2$ as follows
  \begin{equation}
    \sigma_i^n(a^{t-1},x_i^t) = \begin{cases} 1-\frac{1}{n} &\mbox{if } \sigma_i^{**}(a^{t-1},x_i^t)=1 \\ 
      \frac{1}{n} &\mbox{if } \sigma_i^{**}(a^{t-1},x_i^t)=0\\ 
      \sigma_i^{**}(a^{t-1},x_i^t) & o.w. \end{cases} 
  \end{equation}
  Note that $\sigma^n_i$ chooses each action at each private history
  with positive probability. In particular, $\sigma^n$ differs from
  $\sigma^{**}$ in that whenever under $\sigma^{**}$ player $i$
  chooses an action with probability one, under $\sigma^n$ player $i$
  chooses that same action instead with probability
  $1-\frac{1}{n}$. Note that this change does not depend on the
  agent's private history. Hence, when an observable deviation occurs
  under $\sigma^{**}$, the beliefs of the players are unchanged by
  assumption and are close to the beliefs generated by $\sigma^n$ as
  $n$ tends to infinity.
\end{proof}

\bibliographystyle{plainnat}
\bibliography{repeated.bib}

\end{document}